\documentclass[preprint]{acm_proc_article-sp}

\usepackage{dsfont}

\usepackage[pdftex,colorlinks,%
           bookmarks=true,%
           pdftitle=Pioneers of Influence Propagation in Social Networks,%
           pdfauthor=K.\ Gaurav\ -B.\ Blaszczyszyn%
          \ -\  \ P.\ Keeler]{hyperref}
\hypersetup{
   bookmarksnumbered,
   pdfstartview={FitH},
   citecolor={blue},
   linkcolor={red},
   urlcolor=[rgb]{0,0.55,0},
   pdfpagemode={UseOutlines}
}

\newtheorem{thm}{Theorem}[section]

\newtheorem{remark}[thm]{Remark}

\newtheorem{defin}[thm]{Definition}

\newtheorem{example}[thm]{ Example}
\newenvironment{exe}[1][]{\begin{example}[#1]\em}{\end{example}}
\newtheorem{fact}[thm]{Fact}
\newtheorem{claim}[thm]{Claim}

\renewcommand{\P}{\mathbb{P}}
\newcommand{\E}{\mathbb{E}}


\begin{document}
%
\conferenceinfo{}{}
\CopyrightYear{} 
\crdata{}  
\CopyrightYear{}

\title{Pioneers of Influence Propagation in Social Networks}
%
%
%
%
%

\numberofauthors{3} 
%
\author{
%
%
\alignauthor
Kumar Gaurav
\\
       \affaddr{UPMC/Inria/ENS}\\
       \affaddr{23 avenue d'Italie}\\
       \affaddr{Paris, France}\\
       \email{Kumar.Gaurav@inria.fr}
\alignauthor
Bart{\l}omiej~B{\l}aszczyszyn 
\\
       \affaddr{Inria/ENS}\\
       \affaddr{23 avenue d'Italie}\\
       \affaddr{Paris, France}\\
       \email{Bartek.Blaszczyszyn@ens.fr}
\alignauthor
Paul Holger Keeler
\\
       \affaddr{Inria/ENS}\\
       \affaddr{23 avenue d'Italie}\\
       \affaddr{Paris, France}\\
       \email{Paul.Keeler@inria.fr}
}




\maketitle
\begin{abstract}
With the growing importance of corporate viral marketing campaigns on online social networks,
the interest in studies of influence propagation through networks is higher than ever. In a
viral marketing campaign, a firm initially targets a small set of pioneers and hopes that
they would influence a sizeable
fraction of the
population by diffusion of influence through the network. In general, any marketing campaign might fail to go viral in
the first try. As such, it would be useful to have some guide to evaluate 
the effectiveness of the campaign and judge whether it is worthy of further resources, and
in case the campaign has potential, how to hit upon
a good pioneer who can make the campaign go viral.

In this paper, we present a diffusion model developed by enriching the generalized random graph (a.k.a.
configuration model) to provide insight into these questions. We offer the intuition behind the results
on this model, rigorously proved in \cite{viral}, and illustrate them here by taking examples of random networks 
having prototypical degree distributions --- Poisson degree distribution, which is commonly used as a kind of benchmark, and
Power Law degree distribution, which is normally used to approximate the real-world networks. On these networks, the members
are assumed to have varying attitudes towards propagating the information. We analyze three cases, in particular --- (1)~Bernoulli
transmissions, when a member influences each of its friend with
probability $p$; (2)~Node percolation, when a member influences
all its friends with probability $p$ and none with probability $1-p$;
(3)~Coupon-collector transmissions, when a member randomly
selects one of his friends $K$ times with replacement.

We assume that the configuration model is the closest approximation of a large online social network, when the information
available about the network is very limited.
The key insight offered by this study from a firm's perspective is regarding how to evaluate the effectiveness of a 
marketing campaign and do cost-benefit analysis by collecting relevant statistical data from the pioneers it selects.
The campaign evaluation criterion is informed by the observation that if the parameters of the underlying network and 
the campaign effectiveness are such that the campaign
can indeed reach a significant fraction of the population, then the set of good pioneers also
forms a significant fraction of the population. Therefore, in such a case, the firms can even adopt the na\"{i}ve strategy of 
repeatedly picking and targeting some number of pioneers at random from the population. With this strategy, the probability of them 
picking a good pioneer
will increase geometrically fast with the number of tries.

\end{abstract}




\section{Introduction}
Traditionally, firms had few avenues when trying to market their products. And the most important of these
avenues --- television, newspapers, billboards --- are notoriously inflexible and inefficient from the firms' point
of view. Essentially, a firm has to pay to reach even those people who would never form a part of its target
demographic (\cite{maxi}). From the consumers' point of view, they are continuously bombarded with advertisements of products,
a vast majority of which do not interest them. In such a scenario, there is even a possibility that a significant
fraction of consumers might just \textit{tune-off} and become insensitive to every advertisement. The idea of \textit{direct marketing}
tried to overcome some of these problems by the construction of a database of the buying patterns and other relevent 
information of the population, and then targeting only those who are predisposed to get influenced by a particular 
marketing campaign (\cite{And98datamining}).
However, targeting the most responsive customers individually can be expensive and thus limits the reach of 
direct marketing. Moreover, it precludes the possibility of positive externalities such as a favorable shift
in preferences of a demographic segment previously thought to be unresponsive.

The penetration of internet and the emergence of huge online social networks in the last decade has radically
altered the way that people consume media and print, leading to an ongoing decline in importance of conventional
channels and consequently, marketing through them. This radical shift has brought in its wake a host of 
opportunities as well as challenges for the advertisers. On the one hand, firms finally have the possibility
to reach in a cost-effective way not only the past responsive customers, but indeed all the potentially responsive
ones. The importance of this new marketing medium is witnessed by the fact that most of the big corporations, particularly
those providing services or producing consumer goods, now have dedicated \textit{fan-pages} on social networks
to interact with their loyal customers. These, in turn, help the firms spread their new marketing campaign to a
large fraction of the network, reliably and at a fraction of the cost incurred through traditional channels.
On the other hand, new firms without a loyal fan-base have found it a hit-or-miss game to gain attention
through the new medium. Even though the marketing through network is mostly a miss for these firms, but when
it is a hit, it is a spectacular one. This makes it tempting for firms to keep waiting for that spectacular hit
while their marketing budget inflates beyond the point of no return.
The \textit{fat-tail} uncertainty of viral marketing makes it inherently
different from conventional marketing and calls for a fundamentally different approach to
decision-making: which individuals, and how many, to initially target in the online network? What amount of 
resources to spend on these initially targeted \textit{pioneers}? And most importantly, when to stop, admit the 
inefficacy of the current campaign and develop a new one?

\subsection{Results}

In this paper, we introduce a generalized diffusion dynamic on configuration model, which serves as
a very useful approximation of an online social network, particularly when one does not have an access to a
detailed information about the network structure. The diffusion dynamic that we study on this underlying
random network is essentially this: any individual in the network influences a random subset of its neighbours,
the distribution of which depends on the effectiveness of the marketing campaign.

We illustrate large-network-limit results on this model, rigorously proved in \cite{viral}. The empirical distribution of the number of friends
that a person influences in the course of a marketing campaign is taken as a measure of the effectiveness of the campaign.
We present a condition depending
on network degree-distribution (the emperical distribution of the number of friends of a network member)
and the effectiveness of a marketing campaign which, if satisfied, will allow, with a non-negligible probability,
the campaign to go viral when started from a randomly chosen individual in the network. Given this condition,
we present an estimate of the fraction of the population that is reached when the campaign does go viral. We then show that
under the same condition, the fraction of good pioneers in the network, i.e., the individuals who if targeted
initially will lead the campaign to go viral, is non-negligible as well, and we give an estimate of this
fraction. We analyze in detail the process of influence propagation on configuration model having two types of degree-distribution:
Poisson and Power Law. Three examples illustrating the dynamic of influence propagation on these two networks are considered: (1)~Bernoulli
transmissions; (2)~Node percolation; (3)~Coupon-collector transmissions.

Based on the above analysis, we offer a practical decision-making guide for marketing on online
networks which we think would be particularly useful to firms with no prior access to detailed network
structure. Specifically, we consider the na\"{i}ve strategy of picking some number of pioneers at random from
the population, spending some fixed amount of resources on each of them and waiting to see if the campaign
goes viral, picking another batch if it does not. For this strategy, we suggest what statistical data the firm
should collect from its pioneers, and based on these, how to estimate the effectiveness of the campaign 
and make a cost-benefit analysis.

\subsection{Related Work}

While the public imagination is captured by a new viral video of a relatively unknown artist,
researchers have been trying to understand this phenomenon much before the emergence of online social networks. 
It was first studied in the context of the spread of epidemics on complex networks, whence the term 
\textit{viral} marketing originates (\cite{bailey1975mathematical},
\cite{PhysRevE.66.016128}). The impact of social network structure on the propagation of social and economic behavior
has also been recognized (\cite{RePEc:tpr:qjecon:v:107:y:1992:i:3:p:797-817}, \cite{Bikh}) and there is 
growing evidence of its importance (\cite{Banerjee26072013}).

In the context of viral marketing, broadly speaking, two approaches have developed in trying to exploit the network 
structure to maximize the probability
of marketing campaign going viral for each dollar spent. The first approach tries to locate the most influential
individuals in the network who can then be targeted to \textit{seed} the campaign (\cite{Kempe:2003:MSI:956750.956769}). 
This idea has been developed into a machine learning approach which relies on the availability of 
large databases containing detailed information regarding the network structure and the past instances of influence 
propagation to come up
with the best predictor of the most influential individuals who should be targeted for future campaigns (\cite{Domingos02miningthe}). 
Our approach, although fundamentally based  on the analysis of the most influential network members, whom we call
pioneers, differs  in its philosophy of how to apply this to make marketing decisions. 
We do not rely on locating the pioneers by data-mining the
network since the tastes of online network members shift at a rapid rate and the past can be an unreliable predictor
for the current campaign. Moreover, the network database is not necessarily accessible to every firm.
Therefore, we favor a strategy which enables
one to gain exposure to positive fat-tail events while covering his/her back. However, since we suggest a way to measure
the current campaign's effectiveness based on its ongoing diffusion in the network, it can be used to develop
better predictors even when the network information is freely accessible.

The second approach that has become popular in this context does not focus on locating
influential network members but instead on giving incentives to members to act as a conduit
for the diffusion of the campaign (\cite{Arthur:2009:PSV:1697167.1697180}). Various mechanisms 
for determining the optimal incentives have been proposed
and analyzed on random networks (\cite{Marcviral})  as well as on a deterministic network (\cite{citeulike:7102381}). 
This approach can be particularly effective for web-based service providers, e.g., 
movie-renting business, where the non-monetary incentive of using the service freely or cheaply for some period of
time can motivate people to proactively advertise to their friends. However, it is not always possible
to come up with non-monetary expensive while offering monetary incentives is not cost-effective. 
In such cases, our approach can offer a more
cost-effective alternative by leveraging the inherent tendency of a social network to percolate information without
external incentives.

A variety of marketing strategies have been conceived combining the two broad approaches that we described above. 
We hope that our approach would enrich the spectrum and further help in understanding
and exploiting the phenomenon of viral marketing.

\section{Model and thoretical claims}
\label{s.Model}
In this section, we introduce our model and informally describe the results which are rigorously proved in~\cite{viral}.

\subsection{Model}
Consider that the only information available to you about an online social network is the number of friends
that a subset of network members have, a realistic assumption if you are dealing with the biggest and the most
important social networks out there. In such a case, the best you can do is to work with a uniform random network
which agrees with the statistics that you can obtain from the available information. Such a uniform random network is obtained by
constructing what is known as {\em configuration model} (CM); cf~\cite{Remco}. This random network is realized by attaching half-edges 
to each vertex corresponding to its degree (which represents here, the number of friends) and then uniformly
pair-wise matching them to create edges. 
We assume this model of the  social network  throughout the paper and will use  interchangeably the terms  ``social
graph'' and  ``random network'' meaning precisely the CM.
 We call the vertices  of this graph ``nodes'' or ``users''
and graph neighbours ``friends''.

We consider a marketing campaign started from some initial target
called {\em pioneer} in this network. The most natural propagation dynamic to assume in the absence of any other information is that
a person influences a random subset of its friends who further propagate the campaign in the same manner.
The number of friends that a person influences depends on a particular campaign. To model this dynamic,
we enhance the configuration model by partitioning the half-edges into {\em transmitter} half-edges, those through
which the influence can flow and {\em receiver} half-edges which can only receive influence. So, if a person~A influences
his friend B in the network, then in our representation, A has a transmitter half-edge matched to the transmitter
or receiver half-edge of~B.

Let $D$ and $D^{(t)}$ denote the empirically observed distributions of total degree and transmitter degree
respectively. Empirical receiver degree distribution, $D^{(r)}$, is therefore $D - D^{(t)}$. Then we have the
following large-network-limit results, rigorously proved in~\cite{viral}, but
only informally stated here.

\subsection{Theoretical claims}
\label{ss.Claims}
\bigskip
\begin{claim}
\label{cl1}
Starting from a randomly selected pioneer, the campaign can go viral, i.e., reach a strictly positive fraction of the population, with a strictly positive probability if and only if
\begin{equation}\label{e.VA}
\mathbb{E}[D^{(t)}D] > \mathbb{E}[D^{(t)}+D].
\end{equation}

\end{claim}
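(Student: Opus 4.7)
The plan is to analyze the influence spread from a uniformly chosen pioneer by coupling the exploration of the influenced cluster to a Galton--Watson branching process, in the standard spirit of configuration-model arguments as in~\cite{Remco}. Heuristically, since matched half-edges are uniformly random and the graph is locally tree-like on bounded neighbourhoods, the first few generations of the exploration look like independent offspring drawn from a prescribed distribution, and the question of whether the campaign reaches a positive fraction of the population reduces to whether the corresponding branching process is supercritical.

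First I would identify the offspring distributions. The pioneer has random total degree $D$ and transmitter degree $D^{(t)}$, and influences exactly $D^{(t)}$ neighbours; this gives the zeroth-generation offspring count, with mean $\E[D^{(t)}]$. Any subsequent node $v$ is reached by following a transmitter half-edge of its parent to one of $v$'s own half-edges chosen uniformly among the half-edges of~$v$, so the joint type $(D_v,D_v^{(t)})$ of $v$ is size-biased by total degree: $\P(D_v=d,\,D_v^{(t)}=t)\propto d\,\P(D=d,\,D^{(t)}=t)$. Conditional on $(d,t)$, the consumed half-edge is a transmitter with probability $t/d$ (leaving $t-1$ transmitter half-edges for further propagation) and a receiver with probability $(d-t)/d$ (leaving $t$). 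Averaging, the mean number of offspring in generations $\geq 1$ equals
\[
m=\frac{1}{\E[D]}\sum_{d,t}d\,\P(D=d,D^{(t)}=t)\Bigl(\frac{t}{d}(t-1)+\frac{d-t}{d}t\Bigr)=\frac{\E[D D^{(t)}]-\E[D^{(t)}]}{\E[D]}.
\]
The branching process is supercritical iff $m>1$, which after multiplying by $\E[D]>0$ rearranges exactly to the claimed inequality $\E[D^{(t)}D]>\E[D^{(t)}+D]$.

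Next I would make the two directions precise. For the ``if'' direction, when $m>1$, I would implement the classical coupling between the breadth-first exploration of the influenced cluster and the above two-stage branching process, valid while the explored set is of size $o(\sqrt{n})$ (which reaches depth of order $\log n$). The branching process survives with strictly positive probability $\rho$, and on survival the population grows geometrically; a sprinkling / duality step of Molloy--Reed type then upgrades this to the statement that the largest influenced cluster has size linear in $n$ with probability converging to $\rho$. For the ``only if'' direction, if $m\leq 1$, the branching process is a.s.\ finite, and the same coupling yields that the cluster of a uniformly chosen pioneer is $o_p(n)$, ruling out a viral outbreak.

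The main obstacle is making the coupling genuinely rigorous: one must show that the correct local limit at a size-biased vertex is the joint size-biased law of $(D,D^{(t)})$ and that, crucially, the half-edge consumed on arrival is \emph{uniform} among the $d$ half-edges of $v$ (this is what validates the $t/d$ vs.\ $(d-t)/d$ split between transmitter and receiver), which relies on the uniform-matching property of the configuration model together with careful control of half-edge depletion as the exploration proceeds. Extending the local supercritical branching behaviour to an actual linear-size cluster requires the usual second-moment or sprinkling arguments, but everything reduces to standard configuration-model machinery once the correct offspring mean in generations $\geq 1$ is identified; indeed, this is precisely what is carried out in~\cite{viral}.
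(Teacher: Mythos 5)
Your identification of the threshold is correct and coincides, after a change of parametrization, with the paper's own branching-process computation: the paper works with the size-biased joint law $\P\{\widetilde{D}^{(r)}=v,\widetilde{D}^{(t)}=w\}=((v+1)p_{v+1,w}+(w+1)p_{v,w+1})/\E[D]$ and the condition $\E[\widetilde{D}^{(t)}]>1$, whereas you size-bias by total degree and then split on whether the consumed half-edge is a transmitter ($t/d$, leaving $t-1$) or a receiver ($(d-t)/d$, leaving $t$); both give $m=\E[D^{(t)}(D-1)]/\E[D]$ and hence exactly~(\ref{e.VA}). Where you diverge from the paper is in the globalization step. You propose the classical coupling-plus-sprinkling (Molloy--Reed) upgrade from branching-process survival to a linear-size cluster; the paper instead relies (via~\cite{viral}) on the Janson--\L uczak fluid-limit exploration, in which the rescaled count of active transmitter half-edges converges to $H(e^{-t})$ and the first positive zero of $H$ delivers both existence and size of the big influenced set, complemented by a \emph{reverse-propagation} dynamic to identify the good pioneers. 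The fluid limit buys Claims~\ref{c1}--\ref{c.alpha} (uniqueness and the explicit fractions $\alpha$, $\overline\alpha$) in the same stroke, while your route isolates the survival probability more directly.

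One caveat deserves emphasis: the sprinkling/duality step is less off-the-shelf here than you suggest, because the propagation is directional (a transmitter half-edge pushes influence into either type of half-edge, but a receiver half-edge never pushes it back out). The influenced set from a pioneer is therefore a directed out-reachability set, not a component of an undirected graph, so the usual ``two large explored clusters must intersect and hence merge'' argument does not apply verbatim: two large out-reachable sets can intersect without either pioneer reaching the other's set in the directed sense. Handling this asymmetry is precisely why~\cite{viral} introduces the reverse (``acknowledgement'') dynamic and proves separately that its big component coincides with the set of good pioneers and that the forward big component is unique. If you pursue your route, you would need a directed analogue of the merging argument (e.g.\ showing that every surviving forward exploration hits the linear-size set $C^*$ traced out by the reverse process), at which point you have essentially reconstructed the paper's two-dynamic architecture.
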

Note that $\mathbb{E}[D^{(t)}D] > \mathbb{E}[D^{(t)}+D]$ implies 
\begin{equation}\label{e.CA}
\mathbb{E}[D(D-2)]>0
\end{equation}
and recall that 
this latter condition is necessary and sufficient~\footnote{
under a few additional technical assumptions, as $0<\E[D]<\infty$,
$\P\{\,D=1\}>0$, which we tacitly assume throughout the paper} for the existence of a (unique) connected component of
the underlying social graph,  called {\em big component},
encompassing a 
strictly positive fraction of its population; cf~\cite{JanLuc}.
Obviously, our campaign can go viral only within this big component.

Call {\em good pioneers} the pioneers from which the 
 campaign can go viral.
\begin{claim} \label{c1} If~(\ref{e.VA}) is satisfied then  
the population reached is, more or less, the same irrespective of 
the good pioneer chosen initially.
\end{claim}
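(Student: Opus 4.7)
My plan is to recast the influence-propagation problem as a reachability problem on a directed random graph, and then invoke the standard ``uniqueness of the giant component'' phenomenon in that setting. Orient each matched pair of half-edges according to its type: if a transmitter half-edge of $A$ is matched to any half-edge of $B$, draw a directed edge $A \to B$; if two transmitter half-edges are matched, draw edges in both directions. The set reached by the campaign from pioneer $v$ is then exactly the set of vertices reachable from $v$ via directed paths in this enhanced CM. A good pioneer, in the terminology of Claim~\ref{cl1}, is one whose forward component in this digraph has size of order $n$.

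Next, I would establish the analogue of the classical decomposition of a random digraph into a giant strongly connected component (SCC), its in-component $\mathcal{I}$ (vertices that can reach the SCC), and its out-component $\mathcal{O}$ (vertices reachable from the SCC). The key point is that, under condition~\eqref{e.VA}, the branching-process exploration that underlies Claim~\ref{cl1} actually produces two positive survival probabilities: a ``forward'' one (exploring via transmitter half-edges) and a ``backward'' one (exploring via half-edges that transmit \emph{into} the current vertex). By standard local-weak-convergence / branching-process arguments for the CM, the giant SCC has asymptotic density equal to the product of these two survival probabilities, while $\mathcal{O}$ has density equal to the forward survival probability. Both $\mathcal{O}$ and the SCC are unique up to $o(n)$ vertices.

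From here the claim follows quickly. A pioneer $v$ is good precisely when its forward exploration survives, which in the large-$n$ limit is equivalent to $v \in \mathcal{I} \cup \text{SCC}$. Once the exploration started at such a $v$ hits any single vertex of the giant SCC --- which it does in at most $O(\log n)$ steps with high probability, since the exploration tree conditioned on survival grows exponentially --- it subsequently reaches every vertex of $\mathcal{O}$. Hence the reached set from any good pioneer equals $\mathcal{O}$ up to an additive $o(n)$ correction corresponding to the short path from $v$ to the SCC and to finitely many forward explorations that survive but stay short. Dividing by $n$, the asymptotic fraction of the population reached is the deterministic constant $|\mathcal{O}|/n \to \pi_{\text{out}}$, independent of the choice of good pioneer.

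The main obstacle, and the step that requires the most care rather than hand-waving, is the uniqueness of $\mathcal{O}$ and the concentration of its size: one must rule out that different good pioneers reach macroscopically different forward components. The standard way to do this for undirected CMs is a sprinkling / two-round exposure argument showing that any two ``large'' explored components must merge; here one needs the directed analogue, combined with the observation that the forward exploration from a surviving vertex visits a positive fraction of the transmitter half-edges of the SCC, which in turn couples it to a single canonical forward frontier. Everything else --- the branching-process survival computation, the $O(\log n)$ hitting time of the SCC, the negligibility of $\mathcal{I}\setminus\text{SCC}$ in the reached set --- is routine once this uniqueness is in hand.
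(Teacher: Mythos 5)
Your reduction to directed reachability is faithful to the model, and the decomposition into a giant SCC, its in-set $\mathcal{I}$ and its out-set $\mathcal{O}$ is a legitimate and genuinely different route from the one the paper takes: the paper (following \cite{JanLuc}) never introduces the SCC picture, but instead runs an exploration-and-propagation process with exponential clocks and shows that the rescaled count of active transmitter half-edges concentrates on the deterministic trajectory $H(e^{-t})$; uniqueness and concentration of the reached population then fall out of the fluid limit (the big component is swallowed in one excursion of the trajectory above zero, ending at the deterministic zero $\xi$ of $H$, and what remains afterwards is subcritical), while the good pioneers are identified via a separate \emph{reverse} dynamic governed by $\overline H$. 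Your approach buys a cleaner structural statement (reached set $=\mathcal{O}$ up to $o(n)$, good pioneers $=\mathcal{I}\cup\mathrm{SCC}$) and makes Claims~\ref{cl1}--\ref{c.alpha} transparent simultaneously; the fluid-limit approach buys the quantitative size formulas (\ref{e.alpha})--(\ref{e.alphabar}) directly, without ever needing the SCC.

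That said, there is a real gap where you yourself point: the uniqueness and concentration of $\mathcal{O}$ \emph{is} the content of the claim, and you defer it to ``the directed analogue'' of a sprinkling argument. Two concrete difficulties stand in the way of treating that as routine. First, the digraph here is not a standard directed configuration model: orientations are induced on an \emph{undirected} uniform matching by the half-edge types (a transmitter--transmitter match yields a bidirected edge, a transmitter--receiver match a single arc, a receiver--receiver match no arc), so in- and out-stubs are coupled and results such as Cooper--Frieze for directed CMs do not apply off the shelf; the forward and backward explorations are governed by the distinct generating functions $H$ and $\overline H$, which is invisible in your sketch. Second, sprinkling for directed reachability is more delicate than in the undirected case: two large forward sets need not merge, one needs the asymmetric statement that every surviving forward exploration reaches a positive fraction of the transmitter half-edges attached to a canonical giant set, and this is exactly the step the paper replaces by the fluid-limit excursion argument. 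So the proposal is a plausible alternative architecture, but as written it assumes the hard part rather than proving it.
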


Let $C^*$ denote the population reached by the campaign when started from a good pioneer and $\overline{C}^*$ the set
of good pioneers. 

\begin{claim} If~(\ref{e.VA}) is satisfied then  the set of
good pioneers  $\overline{C}^*$ also forms a strictly positive fraction of the population.
\end{claim}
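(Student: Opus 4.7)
The plan is to identify a node's being a good pioneer with survival of the same branching process~(BP) that already underlies the proof of Claim~\ref{cl1}, and then to use a two-point variance argument to pass from an ``expected fraction of good pioneers'' statement to an ``actual fraction of good pioneers with high probability'' statement.

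First, the analysis sketched in support of Claim~\ref{cl1} (see \cite{viral}) couples the forward influence-exploration from a uniformly random pioneer $v^*$ to a branching process whose offspring distribution is obtained from the joint law of $(D,D^{(t)})$ via the standard configuration-model size-biasing, with mean offspring $(\mathbb{E}[D^{(t)}D]-\mathbb{E}[D^{(t)}])/\mathbb{E}[D]$. Under~(\ref{e.VA}) this mean exceeds~$1$, so the BP survives with some probability $\rho>0$, and on survival the exploration reaches $\Theta(n)$ vertices with high probability. This is exactly the statement that $v^*\in\overline{C}^*$ with asymptotic probability $\rho$; hence, by symmetry,
$$\mathbb{E}\bigl[\,|\overline{C}^*|/n\,\bigr]\longrightarrow\rho>0.$$

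Second, to upgrade from expectation to a probabilistic lower bound, I would run two forward explorations in parallel from two independent uniform nodes $u^*,v^*$. As long as each exploration has exposed only $o(\sqrt n)$ half-edges, the two processes are essentially disjoint and can be jointly coupled to two independent copies of the BP above. The event ``the exploration from $v^*$ reaches $\Theta(n)$ vertices'' coincides, up to negligible corrections, with survival of its coupled BP; moreover, by Claim~\ref{c1}, any two successful explorations end up reaching (essentially) the same set $C^*$, so their success events are compatible rather than mutually exclusive. This yields the asymptotic independence
$$\mathbb{P}\bigl(u^*\in\overline{C}^*,\,v^*\in\overline{C}^*\bigr)\longrightarrow\rho^2,$$
and hence, via the second-moment method, $|\overline{C}^*|/n\to\rho>0$ in probability, which is exactly the desired conclusion.

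The main technical obstacle is ensuring that once a single-node forward exploration has remained alive past a fixed finite scale, it continues to reach the linear-sized set $C^*$ rather than stalling at some intermediate scale. This is precisely where Claim~\ref{c1} earns its keep: it tells us that the large reachable set is asymptotically unique, so that ``the BP coupled to the exploration survives'' and ``the exploration reaches the canonical giant set $C^*$'' become the same event in the limit. Once this is granted, the asymptotic independence of the two explorations and the passage from mean to concentration reduce to essentially routine configuration-model bookkeeping of the kind already performed in \cite{viral}.
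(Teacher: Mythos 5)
Your route is genuinely different from the paper's. The paper does not prove this claim through the forward exploration at all: it introduces a \emph{reverse} (``acknowledgement'') dynamic that traces all possible sources of influence of a vertex, identifies $\overline{C}^*$ with the big component of that reversed exploration, and then reruns the same fluid-limit analysis with $H$ and $G_D$ replaced by $\overline{H}$ of (\ref{Hbarx}) and $G_{D^{(t)}}$; the forward branching process appears only as a heuristic confirmation of the answer. You instead stay entirely with the forward exploration from a random pioneer, compute the first moment of $|\overline{C}^*|$ via the branching-process coupling, and upgrade to concentration by a two-source second-moment argument. The combinatorial core of your first step is right: the size-biased offspring mean is indeed $(\E[D^{(t)}D]-\E[D^{(t)}])/\E[D]$, and its exceeding $1$ is exactly condition (\ref{e.VA}). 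If completed, this approach would buy a proof that never mentions the reverse dynamic, which is conceptually appealing.

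The gap is the step you yourself flag and then dismiss: ``on survival the exploration reaches $\Theta(n)$ vertices with high probability.'' Claim~\ref{c1} cannot deliver this. Claim~\ref{c1} quantifies only over pioneers already known to be good (it says their reached populations essentially coincide); it is silent about whether an exploration whose coupled branching process survives must go on to reach a linear-sized set rather than stalling at some intermediate scale such as $\log n$ or $n^{1/3}$ vertices, where the coupling to the branching process has long since broken down. Ruling out such intermediate-scale forward-reachable sets is precisely what the paper extracts from the fluid-limit analysis (the rescaled count of active transmitter half-edges, once it takes off, stays bounded away from zero until the deterministic time $t_0$), and what you would otherwise have to supply by a separate sprinkling or first-moment estimate on vertices with intermediate reachable sets. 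Without it, both your lower bound $\P(v^*\in\overline{C}^*)\ge\rho-o(1)$ and the asymptotic factorization $\P(u^*\in\overline{C}^*,\,v^*\in\overline{C}^*)\to\rho^2$ are unsupported, since each reduces ``being a good pioneer'' to ``surviving to a scale $o(\sqrt{n})$.'' Note that the paper's appendix explicitly concedes that ``additional work is needed'' to identify the set of good pioneers rigorously --- that is exactly why the reverse dynamic is introduced --- so this missing ingredient is not routine bookkeeping; it is the heart of the matter.
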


The next claim gives the estimates on the size of $C^*$ and $\overline{C}^*$. Let

\begin{equation}
\label{Hx}
 H(x):=\mathbb{E}[D] x^2-\mathbb{E}[D^{(r)}] x-\mathbb{E}[D^{(t)}x^D]
\end{equation}
and 
\begin{equation}
\label{Hbarx}
 \overline{H}(x):=\mathbb{E}[D] x^2-\mathbb{E}[D^{(t)}x^{D^{(t)}}] - \mathbb{E}[D^{(r)}x^{D^{(t)}}]x.
\end{equation}

If condition~(\ref{e.VA}) is satisfied then  $H(x)$ and
$\overline{H}(x)$ have unique zeros in $(0,1)$. Call them $\xi$ and $\overline{\xi}$ respectively. Denote also by $G_D(x)=\E[x^D]$ and 
$G_{D^{(t)}}(x)=\E[x^{D^{(t)}}]$ the probability generating function (pgf) of $D$ and $D^{(t)}$, respectively.

\begin{claim} 
\label{c.alpha}
If~(\ref{e.VA}) is satisfied and   
$n$ denotes the size of network population, then for $n$ large,
\begin{equation}
\label{e.alpha}
 \frac{\left|C^*\right|}{n} \approx 1-G_D(\xi) =: \alpha > 0
\end{equation}
and 
\begin{equation}
\label{e.alphabar}
 \frac{\left|\overline{C}^*\right|}{n} \approx 1-G_{D^{(t)}}(\bar\xi) =: \overline\alpha > 0.
\end{equation}
\end{claim}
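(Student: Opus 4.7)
The plan is to analyze the in- and out-cluster explorations from a uniformly chosen vertex in the configuration model via the standard local-weak-convergence correspondence with a two-stage branching process, and to identify the survival probabilities of those branching processes with $\alpha$ and $\overline{\alpha}$. The preliminary observation is that for $n$ large, up to negligible discrepancy, $v\in C^*$ iff $v$ is reachable (through transmitter-directed edges) from the giant strongly connected component of the transmitter-directed graph --- equivalently, iff $v$'s backward (in-)cluster survives in the local limit --- and dually $v\in \overline{C}^*$ iff its forward (out-)cluster survives. It therefore suffices to compute, for a typical root, the survival probability of the branching process that traces its in- (respectively out-)cluster.

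For the in-cluster, let $\xi$ be the probability that the backward exploration along a single half-edge of $v$ yields a finite sub-tree. Following that half-edge by the usual configuration-model pairing, one lands at a neighbour $u$ whose joint degree triple $(\tilde D, \tilde D^{(t)}, \tilde D^{(r)})$ is size-biased by the total degree; the matched half-edge at $u$ is a receiver with probability $\tilde D^{(r)}/\tilde D$ (in which case $u$ does not transmit to $v$ and the sub-tree is trivially empty), or a transmitter with probability $\tilde D^{(t)}/\tilde D$ (in which case $u$ transmits to $v$, and finiteness then requires each of $u$'s $\tilde D-1$ remaining half-edges to yield a finite sub-tree). The resulting recursion
\[
\E[D]\,\xi \;=\; \E[D^{(r)}] \;+\; \E\!\left[D^{(t)} \xi^{D-1}\right]
\]
becomes $H(\xi)=0$ after multiplication by $\xi$. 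Since the in-cluster at the root is finite iff each of its $D$ half-edges contributes a finite sub-tree, $\P(v\notin C^*)\approx \E[\xi^D]=G_D(\xi)$, yielding~(\ref{e.alpha}). The derivation of $\overline{\alpha}$ is entirely dual: the root contributes $D^{(t)}$ outgoing transmitter half-edges, and following one to a size-biased neighbour $u$ leaves $\tilde D^{(t)}-1$ or $\tilde D^{(t)}$ onward transmitter half-edges at $u$ according to whether the arriving half-edge of $u$ is a transmitter or a receiver. The corresponding fixed-point equation for the onward extinction probability $\bar\xi$ rearranges to $\overline{H}(\bar\xi)=0$, and since the root has $D^{(t)}$ transmitter half-edges one obtains $|\overline{C}^*|/n \approx 1-G_{D^{(t)}}(\bar\xi)$.

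Uniqueness of $\xi,\bar\xi$ in $(0,1)$ follows from a standard convexity argument: the right-hand sides of both recursions, viewed as maps $[0,1]\to[0,1]$, are convex, equal $1$ at $x=1$, and have derivative at $x=1$ equal to $(\E[D^{(t)}D]-\E[D^{(t)}])/\E[D]$, which exceeds $1$ exactly under~(\ref{e.VA}). This also makes transparent the threshold in Claim~\ref{cl1}: when~(\ref{e.VA}) fails, neither recursion admits a fixed point below $1$, so the branching process goes extinct almost surely and no macroscopic cluster exists.

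The main obstacle is upgrading this heuristic to genuine concentration. Local weak convergence only identifies the survival probability of the cluster exploration from a single typical vertex, whereas the claim requires that the empirical fraction of vertices lying in the macroscopic in- (respectively out-)cluster converges in probability to $\alpha$ (respectively $\overline{\alpha}$). The standard route is a second-moment bound on the number of ordered pairs of vertices whose truncated in-clusters up to a large but fixed depth $L$ both exceed a given size, combined with a sprinkling/switching argument that merges all moderately-sized clusters into a single giant one. Here this programme must be carried out in the directed configuration model with independent transmitter/receiver marks on the two half-edges of each edge, which is the rigorous content of~\cite{viral}.
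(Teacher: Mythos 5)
Your derivation is correct and it reproduces the paper's formulas exactly: the two fixed-point recursions you write down rearrange precisely to $H(\xi)=0$ and $\overline H(\bar\xi)=0$ from (\ref{Hx})--(\ref{Hbarx}), and your criticality computation $(\E[D^{(t)}D]-\E[D^{(t)}])/\E[D]>1$ is exactly condition~(\ref{e.VA}). However, your route is genuinely different from the one the paper (following \cite{viral} and \cite{JanLuc}) takes. The paper's primary tool is a fluid-limit analysis of a continuous-time exploration-and-propagation process: the trajectory of active transmitter half-edges in the forward dynamic converges to $H(e^{-t})$, whose first positive zero gives $|C^*|/n\approx 1-G_D(e^{-t_0})$, and the set of good pioneers is obtained as the big component of a separate \emph{reverse} ``acknowledgement'' dynamic analysed by the same machinery with $\overline H$ and $G_{D^{(t)}}$ in place of $H$ and $G_D$; the branching process appears there only as a consistency check for $\overline\alpha$. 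You instead run the branching-process/local-weak-limit argument symmetrically for both quantities --- $\alpha$ as the survival probability of the in-cluster of a typical vertex, $\overline\alpha$ as that of the out-cluster --- so your out-cluster computation coincides with the paper's branching-process appendix while your in-cluster computation is the natural dual, which the paper does not spell out. What each approach buys: the fluid limit delivers uniqueness of the giant component and the law of large numbers for its size in one stroke, at the cost of setting up and controlling the death-process trajectories; your route makes the origin of $H$ and $\overline H$ transparent and unifies the two formulas as in/out duality, but it leaves more to be done at the step you correctly flag as the main obstacle --- in particular, identifying $C^*$ with the set of vertices whose in-cluster survives requires the sprinkling step (every macroscopic out-cluster meets every macroscopic in-cluster) together with a bound showing that only $o(n)$ vertices with small in-clusters are nevertheless reached by a good pioneer. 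Since the paper itself defers all of this rigor to \cite{viral}, your sketch is at a comparable level of completeness.
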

Note that $\overline\alpha$ can be interpreted as the probability that the campaign goes viral when started from a randomly chosen pioneer. 

In the Appendix we sketch the main arguments allowing to prove the above claims; see~\cite{viral} for  formal statements and proofs.
 Recall also from~\cite{JanLuc} that under assumption~(\ref{e.CA}) the size $|C_0|$ of the big
network component $C_0$ 
satisfies for $n$ large
\begin{equation}
 \frac{\left|C_0\right|}{n} \approx 1-G_D(\xi_0) =: \alpha_0 > 0\,,
\end{equation}
where $\xi_0$ is the unique zero of 
$$H_0(x):=\E[D]x^2-xG'_D(x)$$ 
in $(0,1)$, with $G'_D(x)$ denoting the derivative of the pgf  of~$D$.

\section{Examples}

Let us consider the results of Section~\ref{s.Model} 
in the context of a  few illustrative network examples.
\subsection{Bernoulli transmissions}
\label{ss.Bern}
Let us assume some arbitrary distribution of the degree $D$
satisfying~(\ref{e.CA}) (to guarantee the existence of the big
component of the social graph). Suppose that each user decides
independently for each of its friends with probability $p\in[0,1]$
whether to transmit the influence to him or not. 
We call this model {\em CM with Bernoulli
transmissions} and $p$ the  {\em transmission probability}.   Note that given the total degree $D$, the 
transmitter degree $D^t$ is Binomial$(D,p)$ random variable. 

\begin{fact}
\label{f.CMB}
In the CM with a general degree distribution $D$ satisfying~(\ref{e.CA}) and  Bernoulli transmissions,   the campaign can go viral if and only if the transmission probability $p$ satisfies 
\begin{equation}
\label{e.CMB-cond}
p>\frac{\E[D]}{\E[D^2]-\E[D]}\,.
\end{equation}
In this latter case the fraction of the influenced population  and the fraction of good pioneers  are asymptotically  equal  to each other  
$|C^*|/n\approx |\bar C^*|/n=:\alpha$, for large $n$, and satisfy 
\begin{equation}
\label{e.CMB-size}
\alpha=1-G_D(\xi)\,,
\end{equation}
where $\xi$ is the unique zero of the function 
$$\E[D]((x-1)/p+1)-G'_D(x)$$ in $(0,1)$.
\end{fact}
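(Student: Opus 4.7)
The plan is to apply Claim~\ref{c.alpha} under the Bernoulli-transmission specialisation, exploiting only the conditional identities $\E[D^{(t)}\mid D]=pD$, $\E[x^{D^{(t)}}\mid D]=(1-p+px)^D$, and the pgf relation $G_{D^{(t)}}(x)=G_D(1-p+px)$ that follow immediately from $D^{(t)}\mid D$ being Binomial$(D,p)$.

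The threshold~(\ref{e.CMB-cond}) is then immediate: substituting $\E[D^{(t)}D]=p\E[D^2]$ and $\E[D^{(t)}]=p\E[D]$ into~(\ref{e.VA}) reduces that condition to $p(\E[D^2]-\E[D])>\E[D]$. For the reached fraction, I would simplify $H(x)$ from~(\ref{Hx}) using $\E[D^{(r)}]=(1-p)\E[D]$ and $\E[D^{(t)}x^D]=\E[pDx^D]=pxG'_D(x)$, obtaining $H(x)=x\bigl(\E[D]x-(1-p)\E[D]-pG'_D(x)\bigr)$; the non-trivial zero $\xi\in(0,1)$ therefore solves $\E[D]((x-1)/p+1)=G'_D(x)$, and~(\ref{e.alpha}) yields $|C^*|/n\approx 1-G_D(\xi)$.

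The key step is the identification $\bar\alpha=\alpha$, which I would establish by computing $\bar H(x)$ from~(\ref{Hbarx}) by conditioning on $D$: the identities $\E[D^{(t)}x^{D^{(t)}}\mid D]=Dpx(1-p+px)^{D-1}$ and $\E[D^{(r)}x^{D^{(t)}}\mid D]=D(1-p)(1-p+px)^{D-1}$ combine neatly, because their coefficients $px$ and $(1-p)x$ add to $x$, giving $\bar H(x)=x\bigl(\E[D]x-G'_D(1-p+px)\bigr)$. The substitution $y=1-p+px$ then transforms the equation $\bar H(x)=0$ in $(0,1)$ into $\E[D]((y-1)/p+1)=G'_D(y)$, which is precisely the equation defining $\xi$; hence $\xi=1-p+p\bar\xi$. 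Plugging this into the pgf relation gives $G_{D^{(t)}}(\bar\xi)=G_D(1-p+p\bar\xi)=G_D(\xi)$, so $\bar\alpha=1-G_{D^{(t)}}(\bar\xi)=1-G_D(\xi)=\alpha$, establishing~(\ref{e.CMB-size}).

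No substantive obstacle is anticipated: the argument is a direct specialisation of Claim~\ref{c.alpha} via routine conditional-expectation manipulations. The only step with any flavour is the change of variable $y=1-p+px$, which exposes the symmetry between $H$ and $\bar H$ that is ultimately responsible for the somewhat surprising coincidence of the asymptotic fractions of influenced users and of good pioneers in the Bernoulli case.
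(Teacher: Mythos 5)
Your argument is correct and follows essentially the same route as the paper's proof: specialise Claim~\ref{c.alpha} using the Binomial$(D,p)$ conditional law of $D^{(t)}$, divide $H(x)$ by $px$, and use the substitution $y=1-p(1-x)$ to match the zeros of $H$ and $\overline H$ and the pgf relation $G_{D^{(t)}}(x)=G_D(1-p(1-x))$, giving $\alpha=\overline\alpha$. (Your computation $\overline H(x)=x\bigl(\E[D]x-G'_D(1-p(1-x))\bigr)$ is in fact the accurate one --- the paper's displayed formula omits the factor $x$ multiplying $G'_D$, a typo that does not affect the argument.)
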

\begin{proof}
Bernoulli transmissions  with~(\ref{Hx}) and~(\ref{Hbarx}) imply $H(x)=\mathbb{E}[D] x^2-(1-p)\mathbb{E}[D] x-pxG'_D(x)$
and  $\overline{H}(x)=\mathbb{E}[D] x^2-G'_D(1-p(1-x))$.
Moreover $\overline G_{D^{(t)}}(x)=G_{D}(1-p(1-x))$. 
Dividing $H(x)$ by $px$ and substituting $y:=1-p(1-x)$ in $\overline H(x)$ and $\overline G_{D^{(t)}}(x)$ completes the proof.
\end{proof}

Consider two specific network degree examples.
\begin{exe}[Poisson degree]
When $D$ has Poisson distribution of parameter $\lambda$
(in which case the CM is asymptotically equivalent to the
Erd\"os-R\'enyi model) the condition~(\ref{e.CMB-cond}) reduces to 
$$\lambda p>1$$
 and the fraction of the influenced population and good pioneers~(\ref{e.CMB-size}) 
is equal to 
$$\alpha=(1-\xi)/p\,,$$
 where 
 $\xi$ is the unique zero of the function 
$$(x-1)/p+1-\exp(\lambda(x-1)$$ in $(0,1)$\,.
\end{exe}

More commonly observed degree-distributions
in social networks have power-law tails.

\begin{exe}[Power-Law (``zipf'') degree]
Assume $D$ having distribution 
$$\P\{D=k\}=k^{-\beta}/\zeta(\beta)\quad k=1,2,\ldots\,,$$
with $\beta>2$, where $\zeta(\beta)$ is the zeta function. 
Recall that the pgf of $D$ is equal to $G_D(x)=\text{Li}_\beta(x)/\zeta(\beta)$, where $\text{Li}_\beta(x)=\sum_{k=1}^\infty k^{-\beta}x^k$ is the so-called poly-logarithmic function.
Condition~(\ref{e.CA}) for the existence of the big component
is equivalent to 
$$\zeta(\beta-2)-2\zeta(\beta-1)>0\,,$$
which is  approximately $\beta<3.48$.
Condition~(\ref{e.CMB-cond}) reduces to 
$$p>\zeta(\beta-1)/(\zeta(\beta-2)-\zeta(\beta-1))$$
and the fraction of the influenced population and good pioneers~(\ref{e.CMB-size})  
is equal to  
$$\alpha=1-\text{Li}_\beta(\xi)\,,$$ where 
$\xi$ is the unique  zero of the function 
$$x\zeta(\beta-1)((x-1)/p+1)-\text{Li}_{\beta-1}(x)$$ in $(0,1)$.
\end{exe}

Recall from Fact~\ref{f.CMB}, that Bernoulli transmissions lead to the model  where 
the fraction of the influenced population  and the fraction of good pioneers  are asymptotically  equal  to each other.  In what follows we present two scenarios  where the set of good pioneers
and the influenced population have different size.

\subsection{Enthusiastic and apathetic users or node percolation} 
\label{ss.percol}
Consider CM with a general degree distribution $D$ satisfying~(\ref{e.VA}), whose nodes
either transmit the influence to all their friends (these are ``enthusiastic'' nodes) or do not transmit to any of their friends (``apathetic'' ones). Let $p$ denote the fraction of nodes in the network which are enthusiastic. Note that this model corresponds to the {\em node-percolation}~\footnote{different than edge-percolation} on the CM. Thus, in this model, given $D$,
 $D^{(t)}=D$ with probability $p$ and $D^{(t)}=0$ with 
probability $1-p$.

\begin{fact}
Consider node-percolation on the CM with a general degree distribution $D$ satisfying~(\ref{e.CA}). The campaign can go viral if and only if the fraction $p$ of enthusiastic users satisfies condition~(\ref{f.CMB}); the same as for the Bernoulli model. Moreover, in this case,
the fraction $\alpha$ of reached population is also the same as in the network with  Bernoulli transmissions, i.e., 
equal to~(\ref{e.CMB-size}) with $\xi$ as in Fact~\ref{f.CMB}. However, the fraction $\overline\alpha$ of good pioneers is equal to $\overline\alpha=p\alpha$.
\end{fact}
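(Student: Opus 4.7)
The plan is to just plug the node-percolation structure (namely $D^{(t)}=D$ with probability $p$ and $D^{(t)}=0$ with probability $1-p$, conditionally on $D$) into the generating-function formulas~(\ref{Hx}), (\ref{Hbarx}) and the viral condition~(\ref{e.VA}) and see that $H(x)$ reduces to the same polynomial as in the Bernoulli case, while $\bar H(x)$ turns out to coincide with $H(x)$. Everything else is then a one-line consequence of Claim~\ref{c.alpha}.

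First I would compute the basic expectations: $\E[D^{(t)}]=p\E[D]$, $\E[D^{(r)}]=(1-p)\E[D]$, and most importantly $\E[D^{(t)} x^D]=p\,\E[D x^D]=p x G'_D(x)$, since the contribution from the ``apathetic'' atom $D^{(t)}=0$ vanishes. Substituting into~(\ref{Hx}) gives
\[
H(x)=\E[D] x^2-(1-p)\E[D] x-p x G'_D(x),
\]
which is precisely the expression already obtained in the proof of Fact~\ref{f.CMB} for Bernoulli transmissions. Hence the zero $\xi\in(0,1)$ is the same, the viral condition~(\ref{e.VA}) reduces to the same inequality~(\ref{e.CMB-cond}), and $\alpha=1-G_D(\xi)$ agrees with~(\ref{e.CMB-size}).

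For $\bar H(x)$ the only new ingredients are $\E[D^{(t)} x^{D^{(t)}}]=p\,\E[D x^D]=p x G'_D(x)$ (again the $D^{(t)}=0$ branch contributes nothing) and $\E[D^{(r)} x^{D^{(t)}}]=(1-p)\E[D]\cdot x^{0}=(1-p)\E[D]$ (now the ``enthusiastic'' branch contributes nothing because there $D^{(r)}=0$). Plugging into~(\ref{Hbarx}),
\[
\bar H(x)=\E[D]x^2-p x G'_D(x)-(1-p)\E[D]\,x=H(x),
\]
so the zero $\bar\xi$ coincides with $\xi$.

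Finally, $G_{D^{(t)}}(x)=\E[x^{D^{(t)}}]=p\,G_D(x)+(1-p)$, and therefore
\[
\bar\alpha=1-G_{D^{(t)}}(\bar\xi)=1-pG_D(\xi)-(1-p)=p\bigl(1-G_D(\xi)\bigr)=p\alpha,
\]
which gives the claimed relation between the fractions of good pioneers and of influenced population. There is no real obstacle in the argument; the only thing to be careful about is correctly splitting the mixed expectations $\E[D^{(t)}x^{D^{(t)}}]$ and $\E[D^{(r)}x^{D^{(t)}}]$ according to the two disjoint branches of the node-percolation rule, since each branch annihilates a different factor.
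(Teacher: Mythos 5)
Your computation is correct and follows exactly the route the paper intends: the paper states only that ``the proof follows easily from the general results of Section~\ref{ss.Claims}{}'', and your substitution of the node-percolation law into~(\ref{Hx}), (\ref{Hbarx}) and~(\ref{e.VA}) supplies precisely the omitted details, correctly showing $H=\overline H$ and $G_{D^{(t)}}(x)=pG_D(x)+(1-p)$, whence $\overline\alpha=p\alpha$. No gaps.
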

The proof  follows easily from the general results of Section~\ref{ss.Claims}. Note that the {\em campaign on the network with enthusiastic and apathetic users can reach the same population as in the Bernoulli transmissions, however there are less good pioneers}.

\subsection{Absentminded users or coupon-collector transmissions}  
\label{ss.CC}
Consider again CM with a general degree distribution $D$ satisfying~(\ref{e.VA}). Suppose that each user is willing (or allowed) to transmit $K$ messages of influence. In this regard, it randomly selects $K$ times one of his friends {\em with replacement}
(as if he were forgetting his previous choices).
An equivalent dynamic of the influence propagation 
can be formulated as follows: every influenced user, at all times, keeps choosing  one of its friends
uniformly at random and transmits the influence to him;
it stops forwarding the influence  after $K$ transmissions.

In this model the transmission degree $D^{(t)}$ 
correspond to  the number of collected coupons in the classical coupon collector problem with the number of coupons being the  vertex degree $D$ and the
number of trials $K$.
The conditional distribution of $D^{(t)}$ given $D$
can be expressed as follows:
$$
\P\{\,D^{(t)}=k\,|\,D\,\}=\frac{D!}{(D-k)!D^{-K}}\Bigl\{{K\atop k}\Bigr\}\,,$$
where $\{{K\atop k}\}=1/k!
\sum_{i=0}^K(-1)^i{k\choose i}(k-i)^K$ is the Stirling number of the second kind. 

Calculating the pgf for this distribution is tedious and we 
do not present analytical results regarding this model but only
simulations and estimation. As we shall see in
Section~\ref{ss.Numerical}, in this model 
 {\em the influenced population is smaller than the population of good pioneers.}

\subsection{Numerical examples}
\label{ss.Numerical}
We will present now a few numerical examples of networks and 
diffusion models presented above.

\subsubsection{Simulations}
In all our examples we simulate the enhanced configuration model on
$N=1000$ nodes
 assuming some particular 
node degree $D$ distribution and influence propagation mechanism
modeled by the conditional distribution of the transmitter degree
$D^{(t)}$. More precisely, we sample the individual node degrees and
transmitter degrees  $(D_i,D_i^{(t)})$ 
$i=1\,...\,N$ independently from  the joint distribution of 
 $(D,D^{(t)})$ 
and use these values to construct an instance of our 
enhanced CM by uniform pairwise matching of the half-edges.
We calculate the relative size of the  influenced population and the
set of good pioneers through the exploration of the
influenced components for all nodes.~\footnote{The simulations are run in
{\em python} using the {\em networkx} package.}
In fact, relative sizes of the
populations reached from different pioneers concentrate 
very clearly, as shown on Figure~\ref{f.Hist}, which illustrates the
statement of Claim~\ref{cl1}.  

\begin{figure}[h!]
\begin{center}
\centerline{\includegraphics[width=1\linewidth]{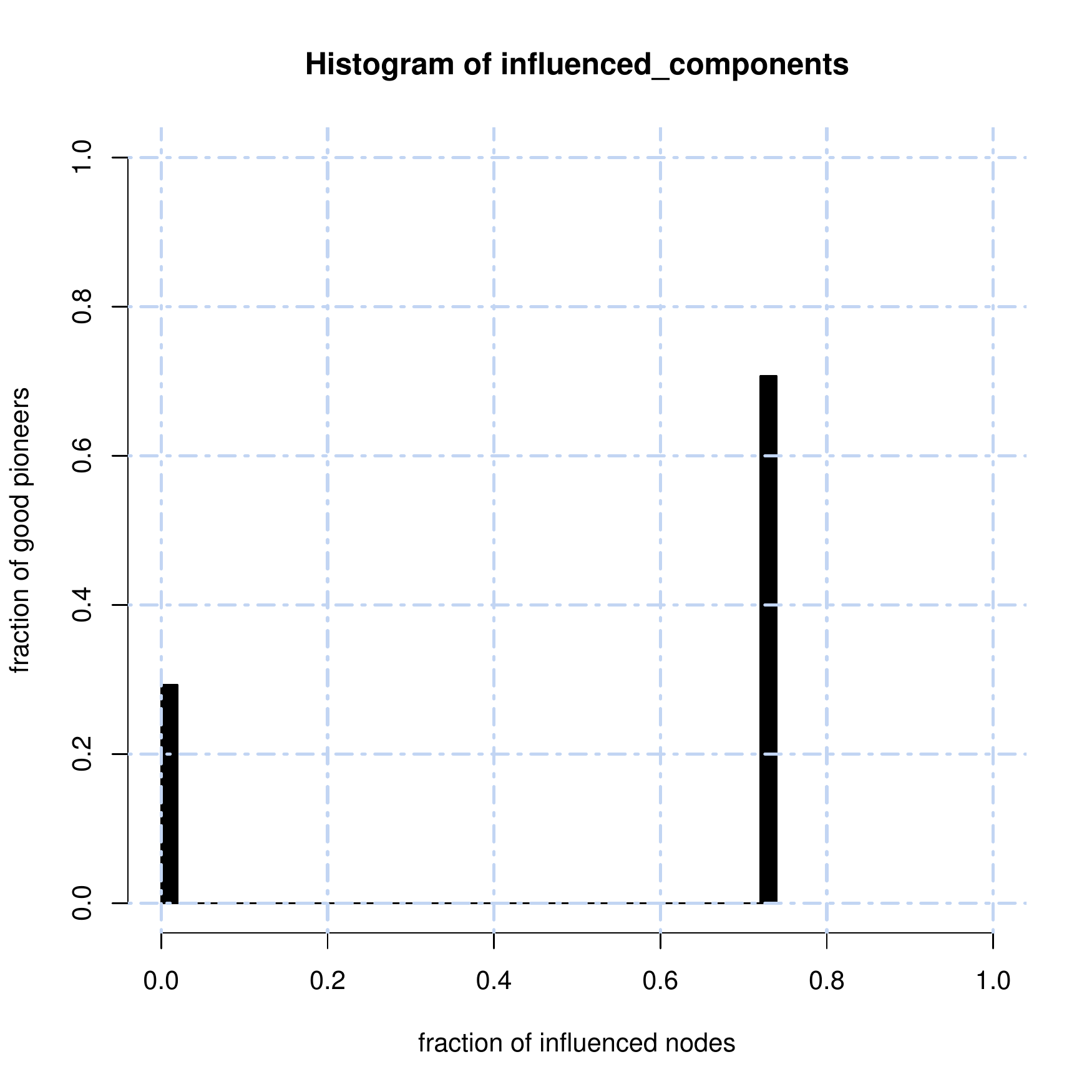}}
\caption{Concentration of the relative sizes of 
populations reached from different pioneers. CM with Poisson degree of mean
$\lambda=2$ and Bernoulli transmissions with $p=0.8$.
 \label{f.Hist}}
\end{center}
\end{figure}

\subsubsection{Estimation}
\label{sss.Estimation}
We adopt also the following ``semi-analytic'' approach:
Using the sample $(D_i,D_i^{(t)})$, $i=1,\ldots,N$ used to
construct the CM, we consider estimators 
\begin{align}
\hat G_D(x)&:=\frac{1}{N}\sum_{i=1}^N x^{D_i}\label{e.hatD}\\
\hat G_D^{(t)}(x)&:=\frac{1}{N}\sum_{i=1}^N x^{D_i^{(t)}}\label{e.hatDt}\\
\hat H(x)&:=\frac{1}{N}\sum_{i=1}^N \Bigl( D_i x^2-(D_i-D_i^{(t)})x-
D_i^{(t)}x^{D_i}\Bigr)\label{e.hatH}\\
\hat{\overline{H}}(x)&:=\sum_{i=1}^N \Bigl(D_i x^2-D_i^{(t)}x^{D_i^{(t)}} - 
(D_i-D_i^{(t)})x^{D_i^{(t)}+1}\Bigr)\,\label{e.hatHbar}
\end{align}
of the functions $G_D(x)$,  $G_{D^{(t)}}(x)$, $H(x)$ and
 $\overline H(x)$, respectively. We calculate estimators $\hat\alpha$
 and $\hat{\overline\alpha}$ of the fraction of the influenced population
 $\alpha$ and of  good pioneers $\overline\alpha$
using Claim~\ref{e.alpha} and the estimated functions $\hat G_D(x)$,
$\hat G_{D^{(t)}}(x)$, $\hat H(x)$ and
 $\hat{\overline H}(x)$. (That is, we  find numerically zeros $\hat\xi$ and
 $\hat{\overline\xi}$ of  $\hat H(x)$  and $\hat{\overline H}(x)$,
 respectively, and plug them into~(\ref{e.alpha}) and~(\ref{e.alphabar}),
with $\hat G_D(x)$ and $\hat G_{D^{(t)}}(x)$ replacing   $G_D(x)$ and
$G_{D^{(t)}}(x)$.)

Note that in the semi-analytic approach we do not need to know/construct 
the realization of the underlying model. This observation is
a basis  of a  {\em campaign evaluation method} that we propose in
Section~\ref{s.Implications}. In fact, in reality one usually 
does not have the complete insight into the network structure
and needs to rely on  statistics collected  from the  initially
contacted pioneers. 

\subsubsection{Analytic evaluation}
Finally, for all models, except the ``coupon-collector'' one of
Section~\ref{ss.CC}, we calculate numerically the values of $\alpha$
and $\overline\alpha$ using the explicit forms of all the involved
functions.
(For the coupon-collector model we obtained the ``true'' values of   $\alpha$
and $\overline\alpha$ from a sample of $(D_i,D_i^{(t)})$ of a
larger size $N$.)

When comparing these analytic solutions to the simulation and
semi-analytic estimates we see that in some cases $N=1000$ is not big
enough to match the theoretical values. One can easily consider 
larger samples, however we decided to stay with $N=1000$ to show how
the quality of the estimation varies over different model assumptions. 
Also, $N=1000$ seems to be near the lower range of the  number of initial
pioneers one needs to contact to produce a reasonable prognosis for the
development of the campaign.

\subsubsection{Case study}
Figures~\ref{f.PB} and~\ref{f.PLB}
present  Bernoulli influence propagation on the CM with Poisson and Power-Law degree distribution of mean $\E[D]=2,\,4,\,6$.
Bernoulli transmissions imply the set of good pioneers and influenced population of the same size.
The Power-Law degree with $\beta<3$ leads to positive fraction of good pioneers and influenced component for all $p>0$,
while for the Poisson degree distribution one observes the phase transition at $p=1/\lambda$.
That is,  the fractions of good pioneers and the influenced component are strictly positive if and only if $p>1/\lambda$.

Figure~\ref{f.PBPL} shows again the model with Bernoulli transmissions
on CM with Poisson and Power-Law degree distribution, this time
however for $\E[D]\approx 1.35$ for which both models exhibit the
phase transition in $p$.

A general observation is that the Power-Law degree distribution gives smaller
critical values of $p$ for the existence of a positive fraction of influenced
population  and good pioneers, however for these the size of these sets
increase with the transmission probability $p$ more slowly in the Bernoulli model.  
Obviously the values of $\alpha=\overline\alpha$ at $p=1$ correspond to the size of the biggest
connected component of the underlying CM.

Figure~\ref{f.PNPPL} shows  the node percolation (or ``apathetic and
enthusiastic users) on CM 
with Poisson and Power-Law degree distribution of mean 
$\E[D]\approx 2$. Note that the influenced components have the same size as
for Bernoulli transmissions, however good
components are smaller. The critical values of $p$ for the phase
transition are also the same as for Bernoulli transitions.
Note that estimation of the  node percolation model is more
difficult  than the Bernoulli transmissions because of higher variance of the estimators.

Finally, Figure~\ref{f.PCCPL} shows  that 
the coupon collector dynamics (``absentminded users'') on CM
produces  bigger sets of  good pioneers than  the influenced population.

\section{Application to Viral Campaign Evaluation}
\label{s.Implications}
What does the analysis presented up to now suggest in terms of strategy for a firm which is just about to start
a new marketing campaign on an online social network without having any prior information about
the network structure? 

If the fraction of good pioneers in the network is non-negligible, the 
firm has a strictly positive probability of picking a good pioneer even when it picks a pioneer 
uniformly at random from the network. 
Now when is the fraction of good pioneers non-negligible?
Since the firm has no prior information about the network structure and the campaign effectiveness,
the best it can do is to collect information from its pioneers regarding the number of friends that they
have (total degree) and the number of friends they influence in this campaign (transmitter degree), and 
then assume that the network is a uniform random network having the sampled total degree and transmitter
degree distributions. The collected information, denote it by  $(D_i,D_i^{(t)})$, $i=1,\ldots, N$, allows to 
estimate various quantities relevant to the potential development of the ongoing campaign,  as we did in~\ref{ss.Numerical}.

More precisely the results presented in Section~\ref{ss.Claims} suggest the following 
approach.

\paragraph{Network fragmentation}
The first and foremost question  is whether 
the  network is not too fragmented to allow for viral marketing. 
This is related to condition~(\ref{e.CA}). In order to answer this question 
one considers the following estimator of $\E[D^2-2D]$
$$\frac{1}{N}\sum_{i=1}^N \Bigl(D_i^2-2D_i\Bigr)\,.$$
 If the value of this estimator is not sharply larger than zero then
the firm must assume that the network is too fragmented to allow for viral marketing. 
 Natural confidence intervals can be considered in this context too. Evidently, the confidence increase as the firm picks more pioneers and collects more data.

\paragraph{Effectiveness of the campaign}
If one estimates that the network is not too fragmented, 
then the firm can evaluate the effectiveness of the ongoing campaign.
It is related to condition~(\ref{e.VA}).
Again one considers the natural estimate of 
 $\E[DD^{(t)}-D^{(t)}-D]$
$$\frac{1}{N}\sum_{i=1}^N \Bigl(D_iD_i^{(t)}-D_i-D_i^{(t)}\Bigr)\,.$$
If the value of this estimator is sharply larger than zero then
the firm can assume that there is  a realistic chance of picking
a good pioneer via random sampling and make the campaign go viral.
Otherwise,  the previous phase of the campaign  can be considered as non-efficient.

\paragraph{Cost-benefit analysis} 
 If the firm deems the campaign
to be effective, it can then, exactly as we did in \ref{sss.Estimation}, come up with the estimates of the relative fractions of good pioneers and population vulnerable to influnce, and do a cost-benefit analysis.
What we have described is an outline which can be used by the firms to come up with a rational methodology
for making decisions in the context of viral marketing.

\begin{figure}[t!]
\begin{center}
\centerline{\includegraphics[width=1.02\linewidth]{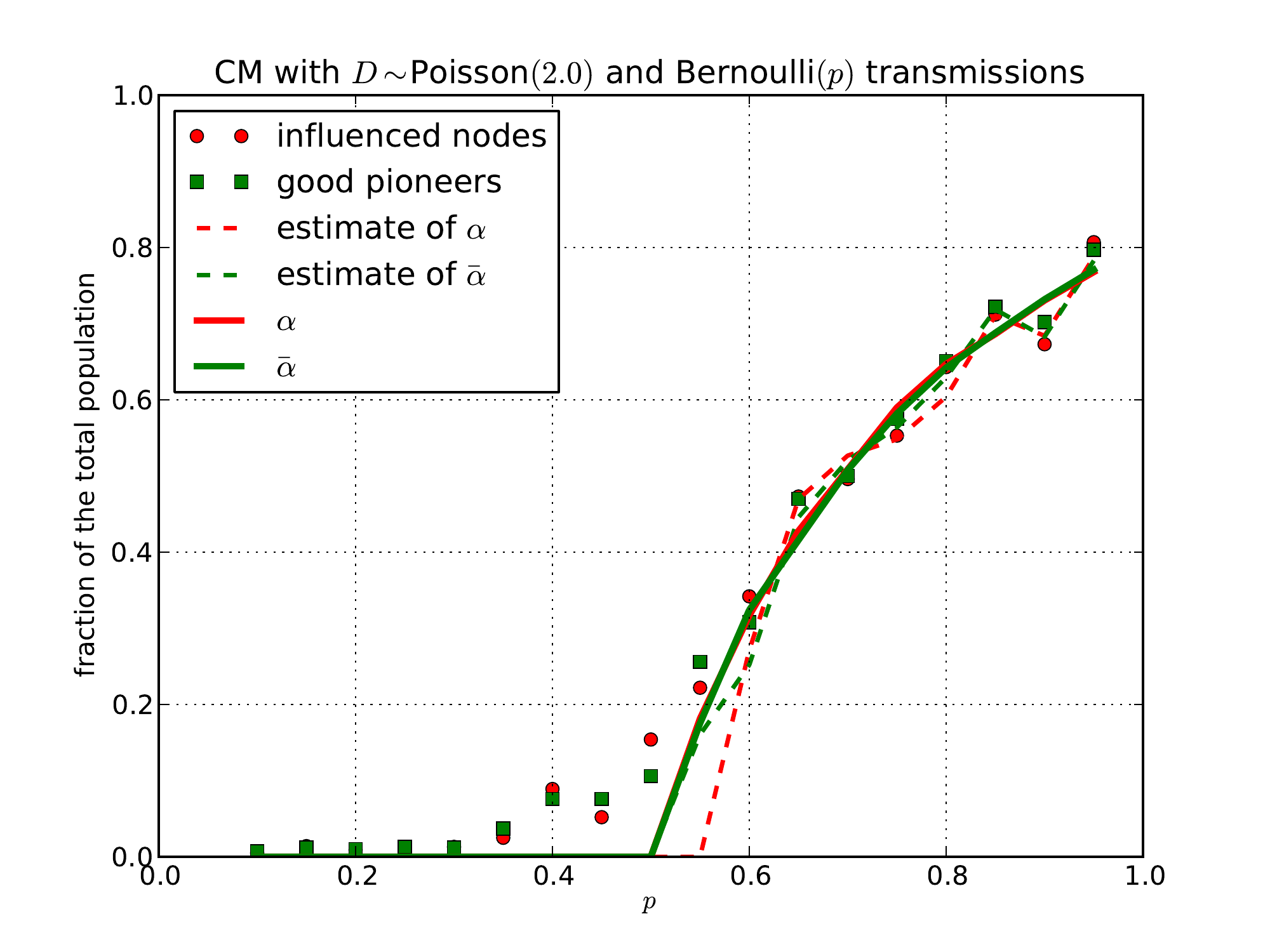}}
\centerline{\includegraphics[width=1.02\linewidth]{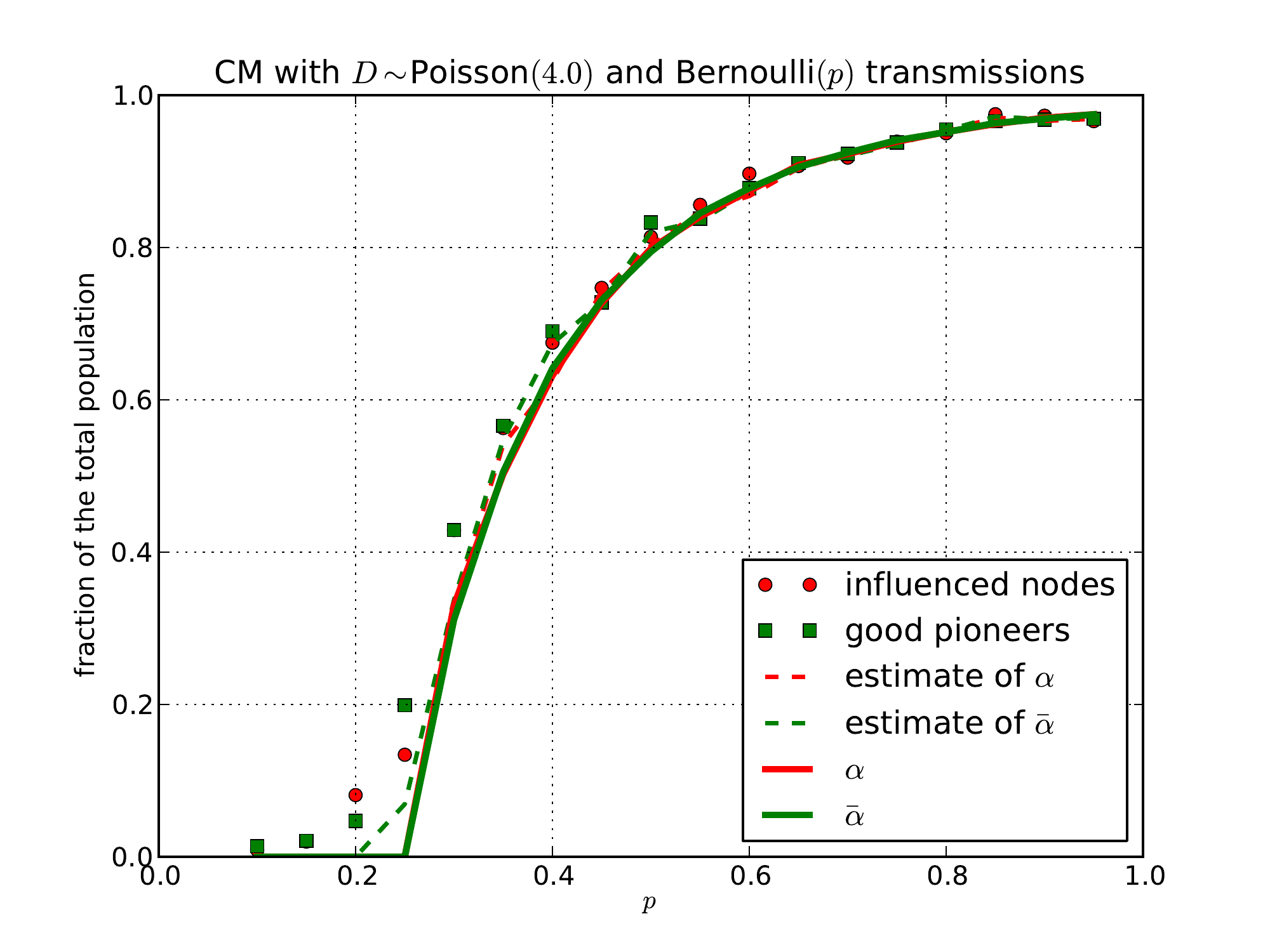}}
\centerline{\includegraphics[width=1.02\linewidth]{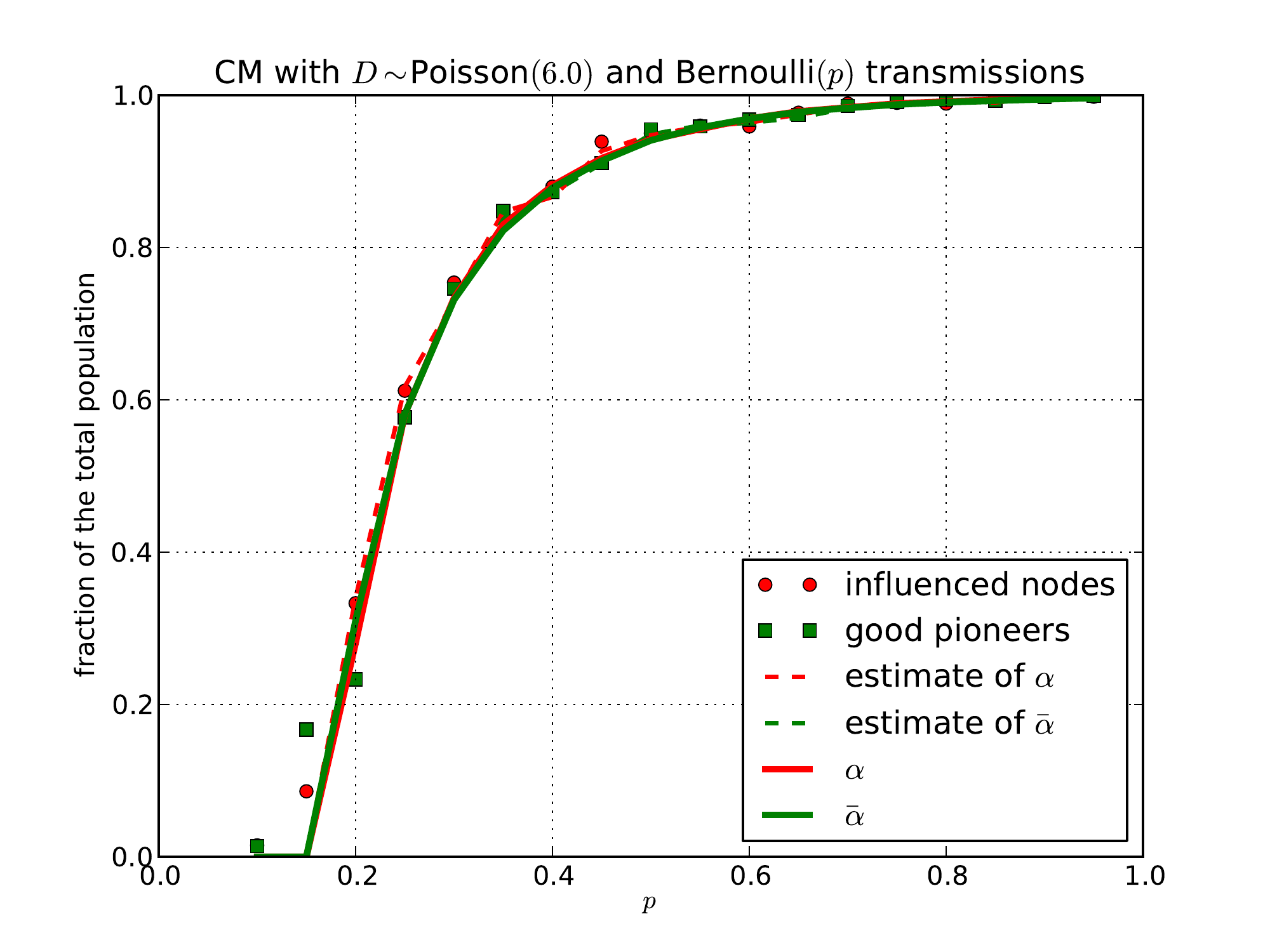}}
\caption{CM with Poisson degree of mean $\lambda=2,\,4,\,6$ and
  Bernoulli  transmissions with probability $p$.
The set of good pioneers and the influenced population are of the same size. Their fraction is strictly positive for $p>1/\lambda$.
 \label{f.PB}}
\end{center}
\end{figure}

\begin{figure}[t!]
\begin{center}
\centerline{\includegraphics[width=1.02\linewidth]{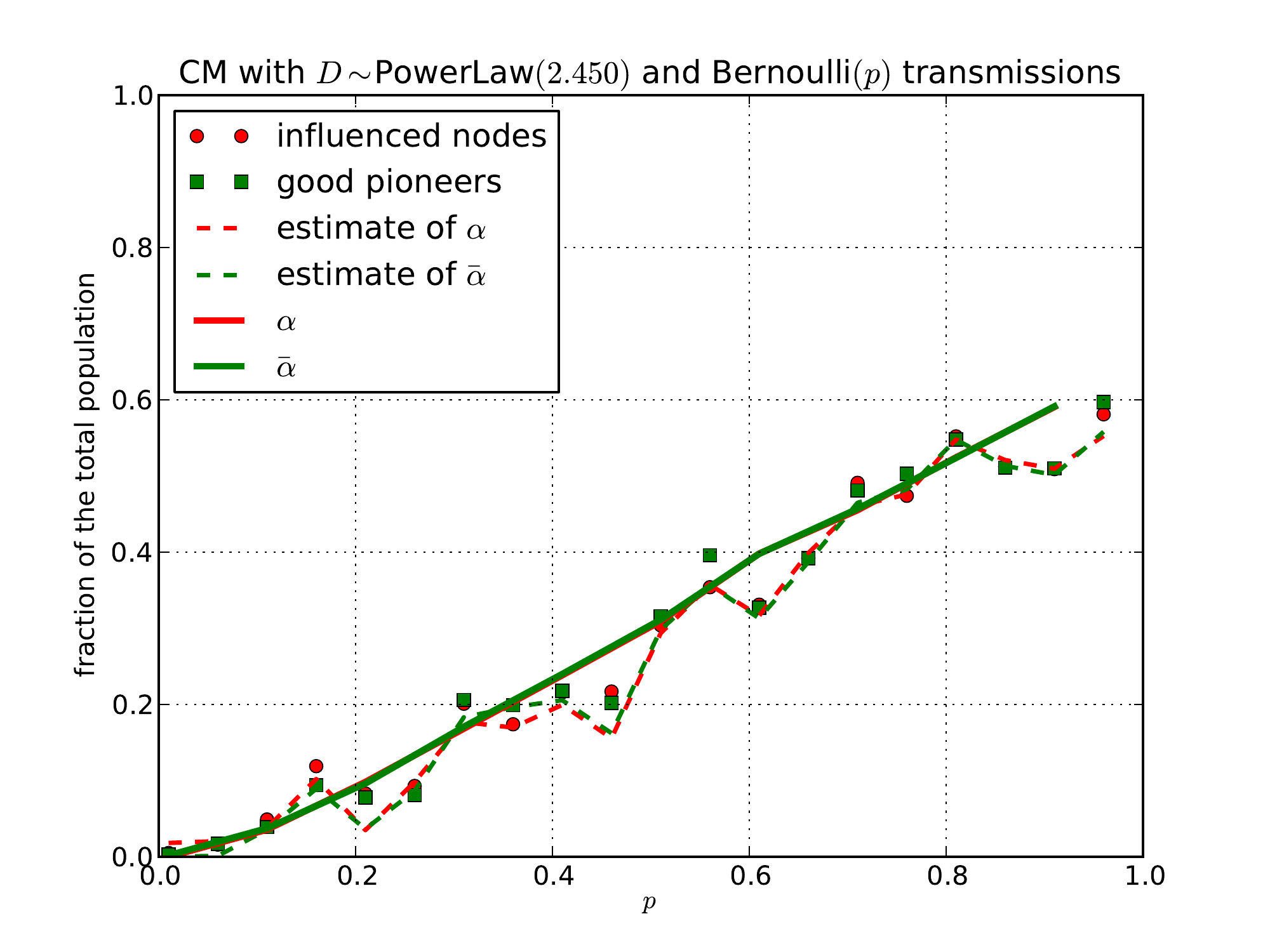}}
\centerline{\includegraphics[width=1.02\linewidth]{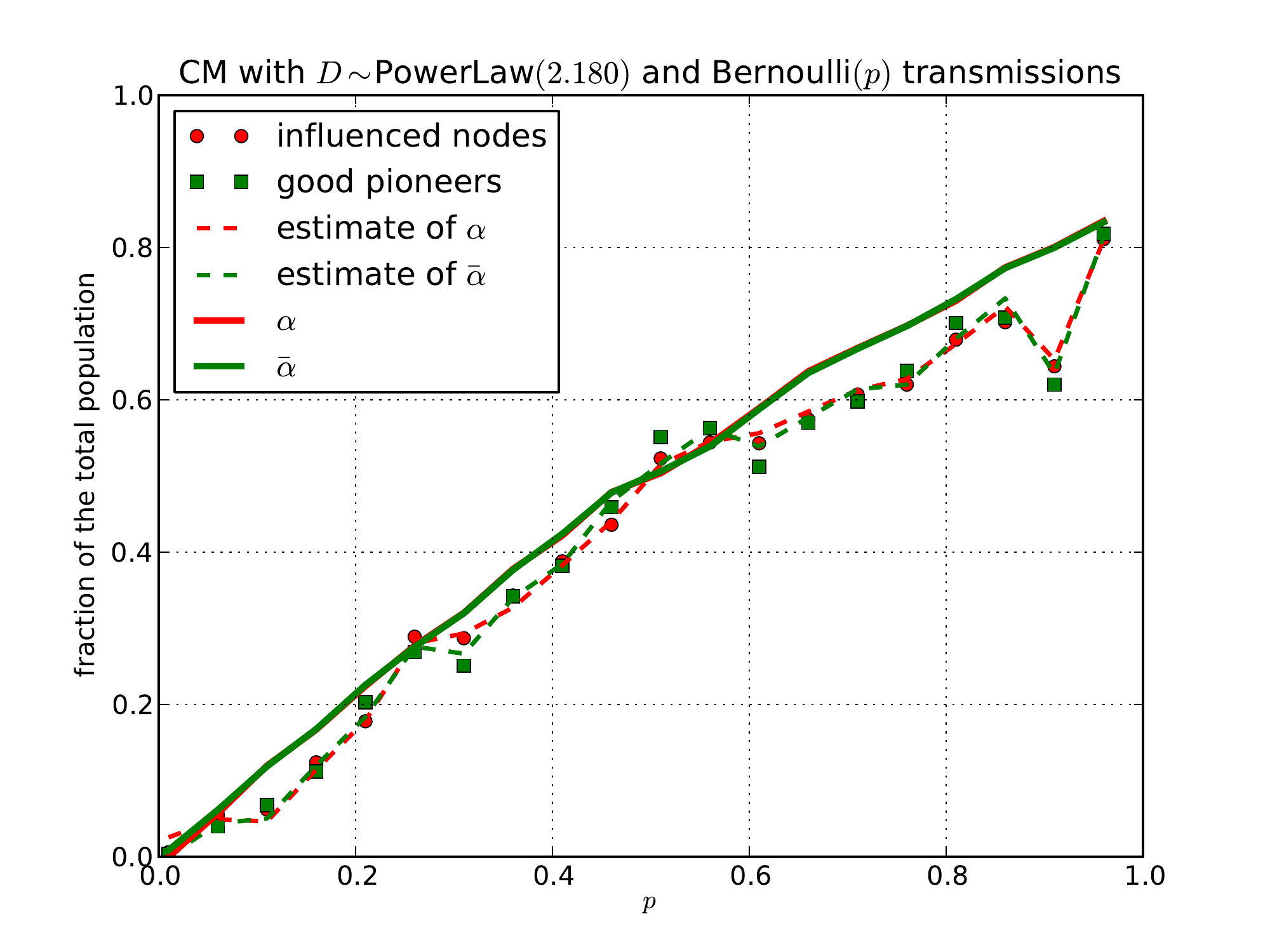}}
\centerline{\includegraphics[width=1.02\linewidth]{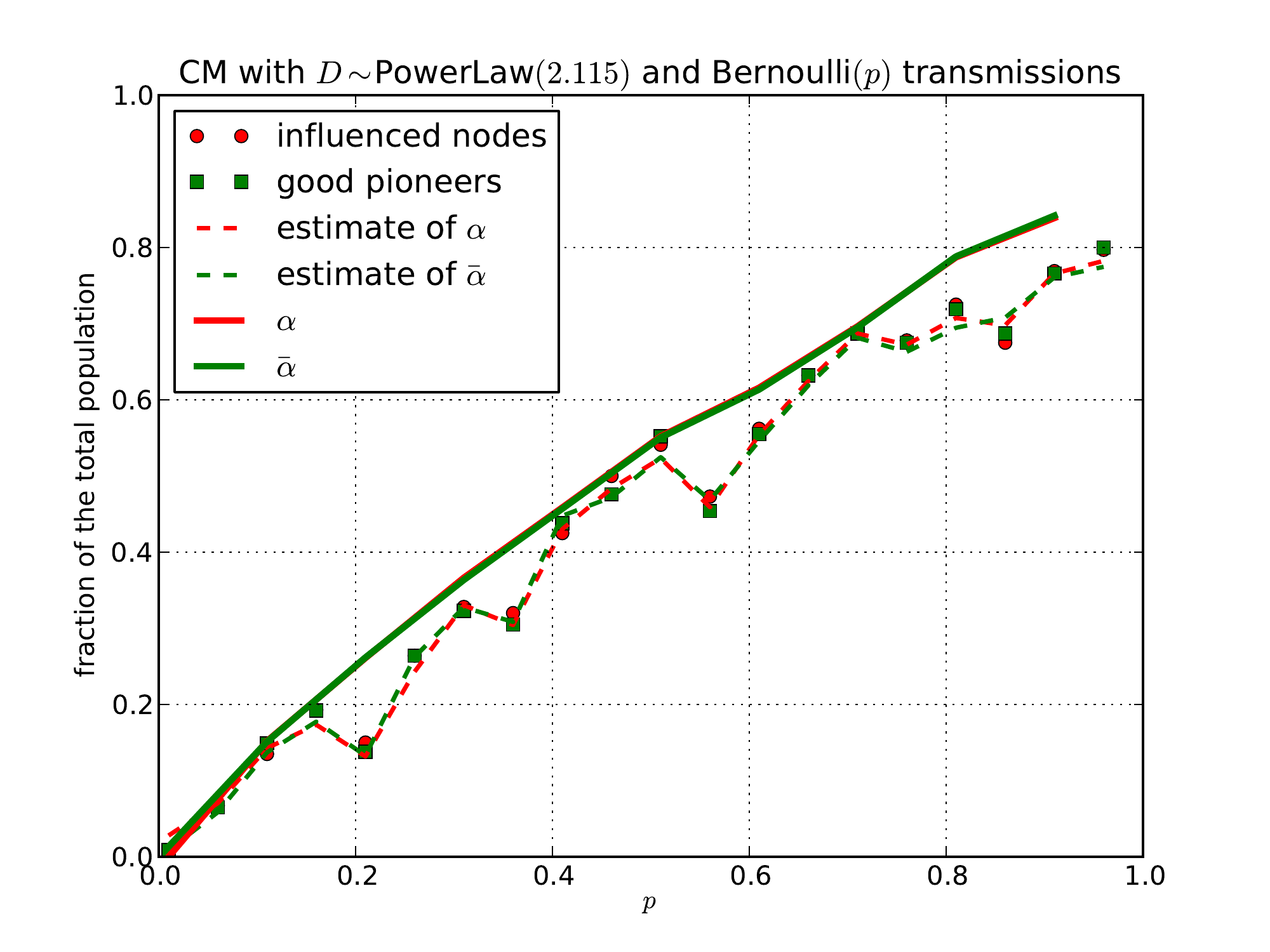}}
\caption{CM with Power-Law degree of parameter $\beta=2.450,\,2.180,\,2.115$ (corresponding  to $\E[D]\approx 2,\,4,\,6$ and
  Bernoulli  transmissions with probability $p$. 
The set of good pioneers and the influenced population are of the same size. Their fraction  is strictly positive for all $p>0$
whenever  $\beta\le 3$.
 \label{f.PLB}}
\end{center}
\end{figure}

\begin{figure}[t!]
\begin{center}
\centerline{\includegraphics[width=1.2\linewidth]{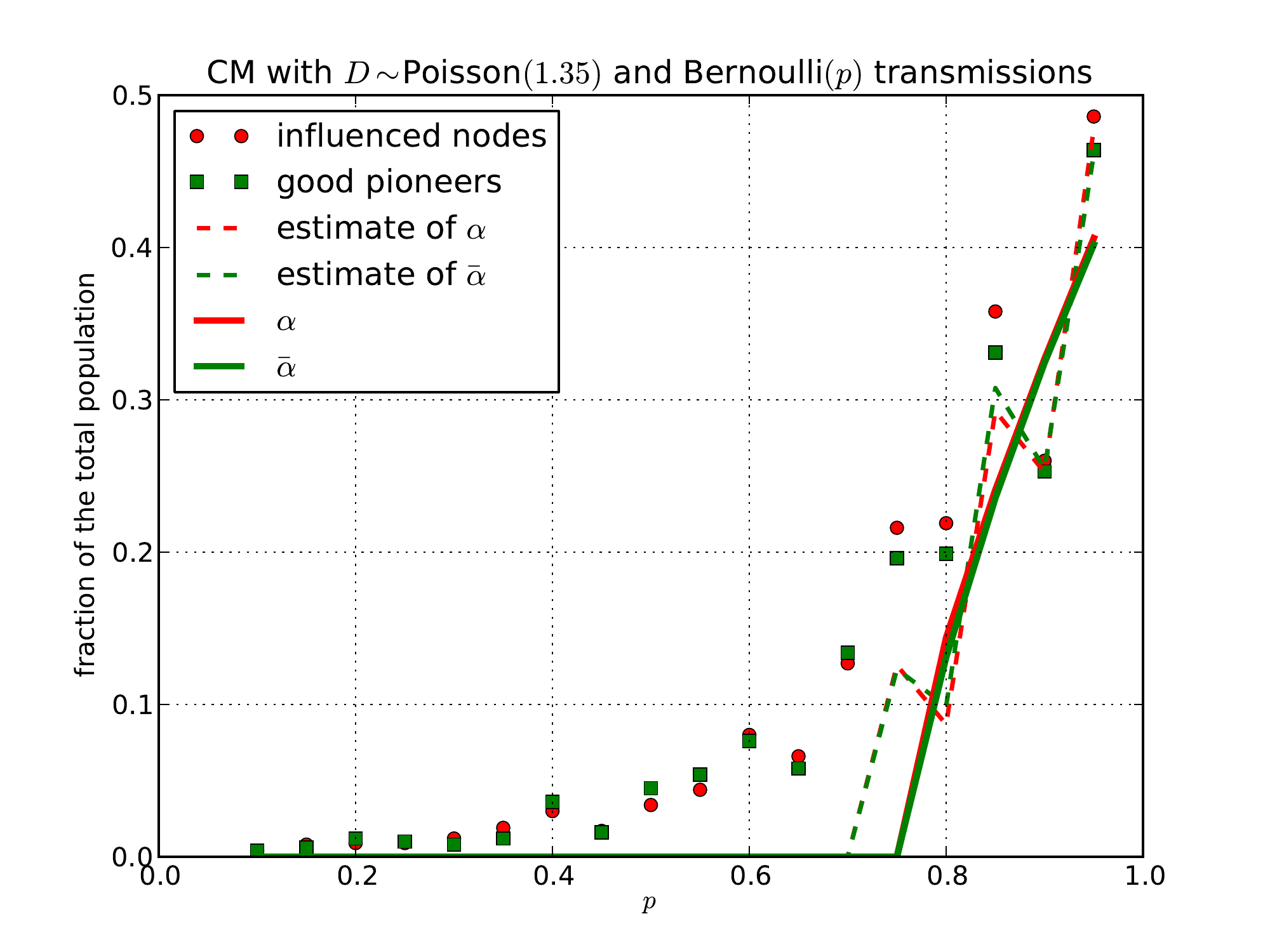}}
\centerline{\includegraphics[width=1.2\linewidth]{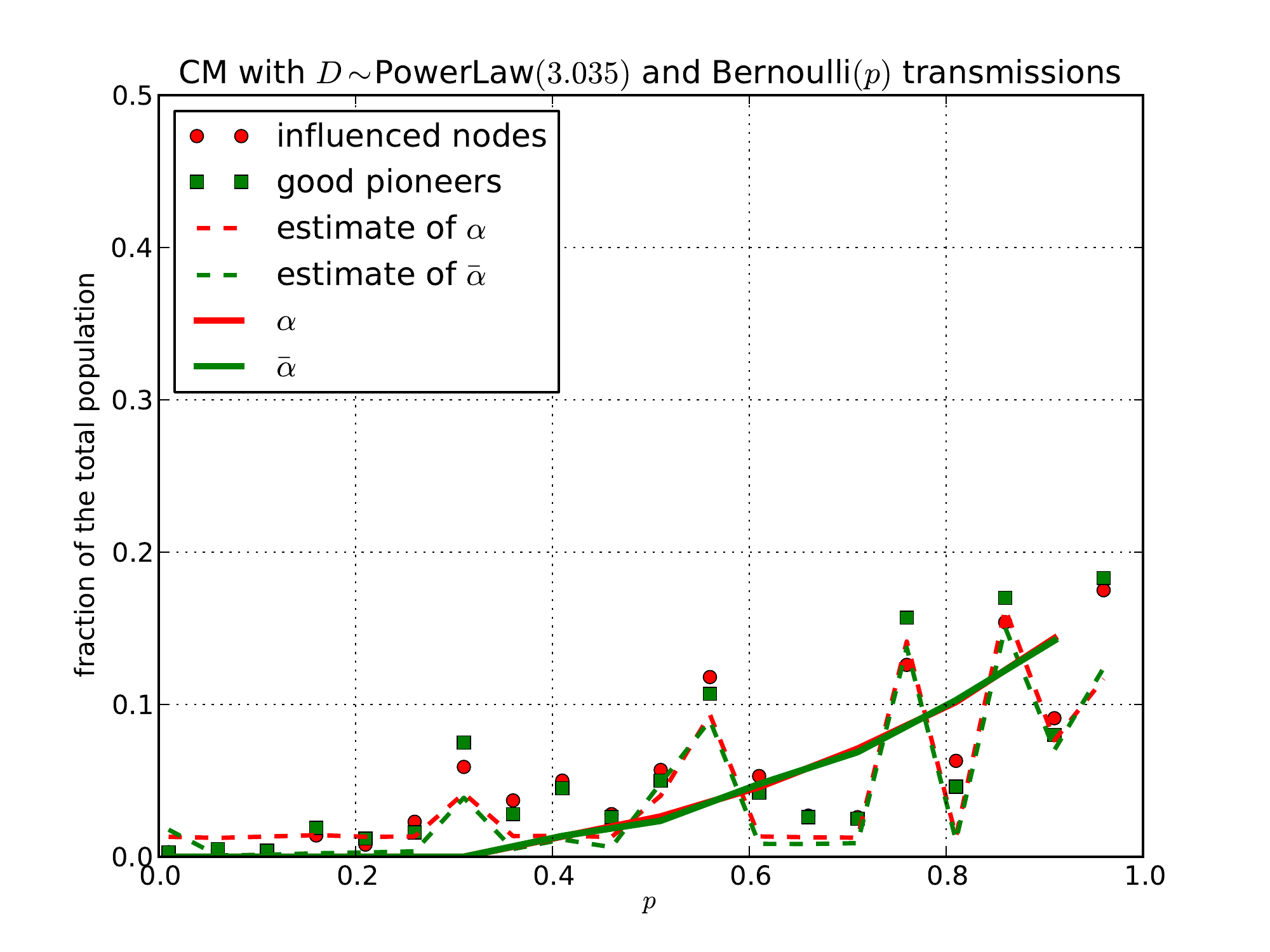}}
\caption{CM with Poisson and Power-Law degree  of mean $\E[D]\approx
  1.35$ ($\lambda=1.35$ and $\beta=3.035$) and Bernoulli
  transmissions. The set of  good pioneers and the influenced
  population are of the same size for each model. One observes the phase transition in both models, at $p=1/\lambda$ and $p=\zeta(\beta-1)/(\zeta(\beta-2)-\zeta(\beta-1)$, respectively. 
 \label{f.PBPL}}
\end{center}
\end{figure}

\begin{figure}[t!]
\begin{center}
\centerline{\includegraphics[width=1.2\linewidth]{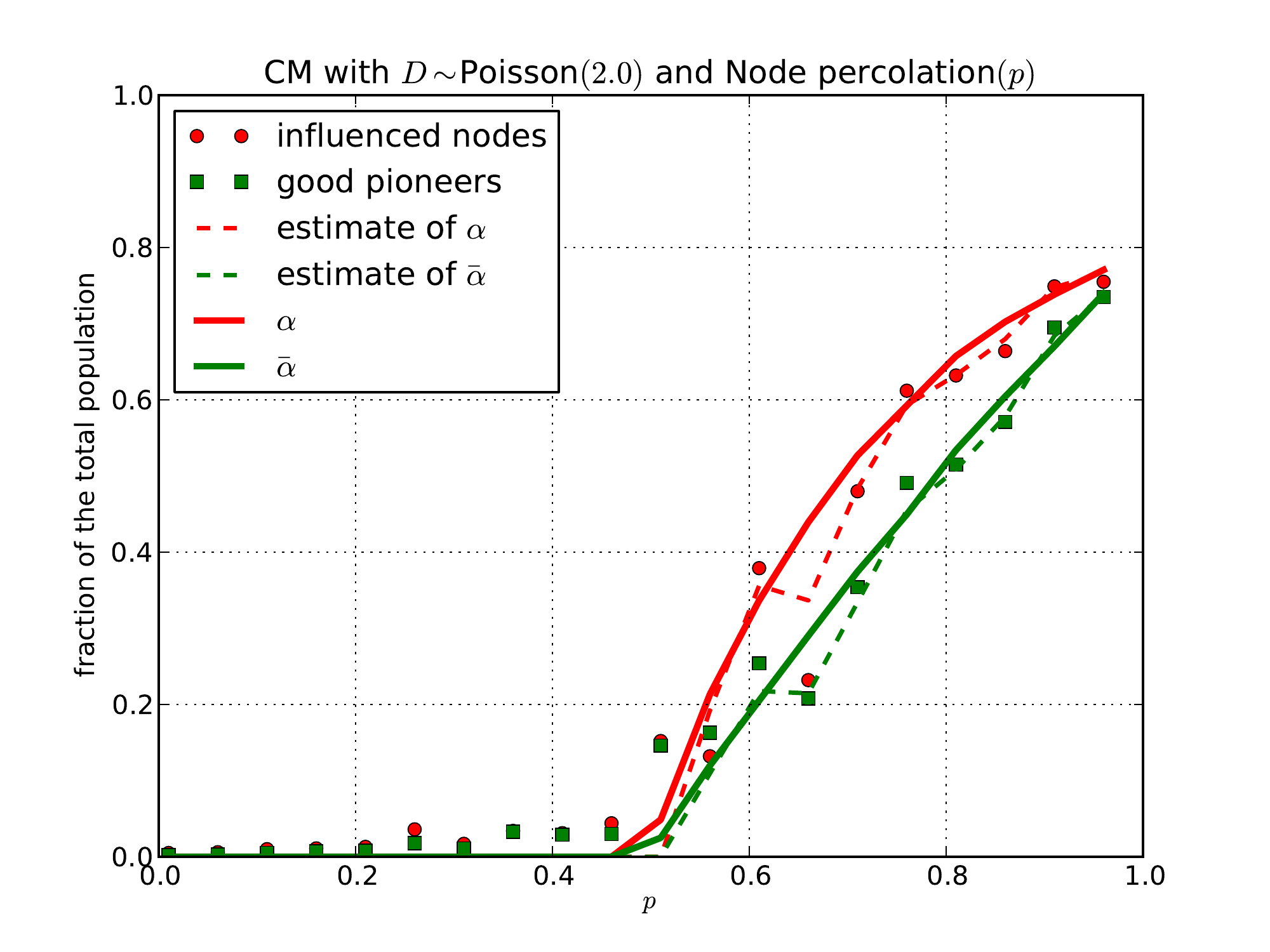}}
\centerline{\includegraphics[width=1.2\linewidth]{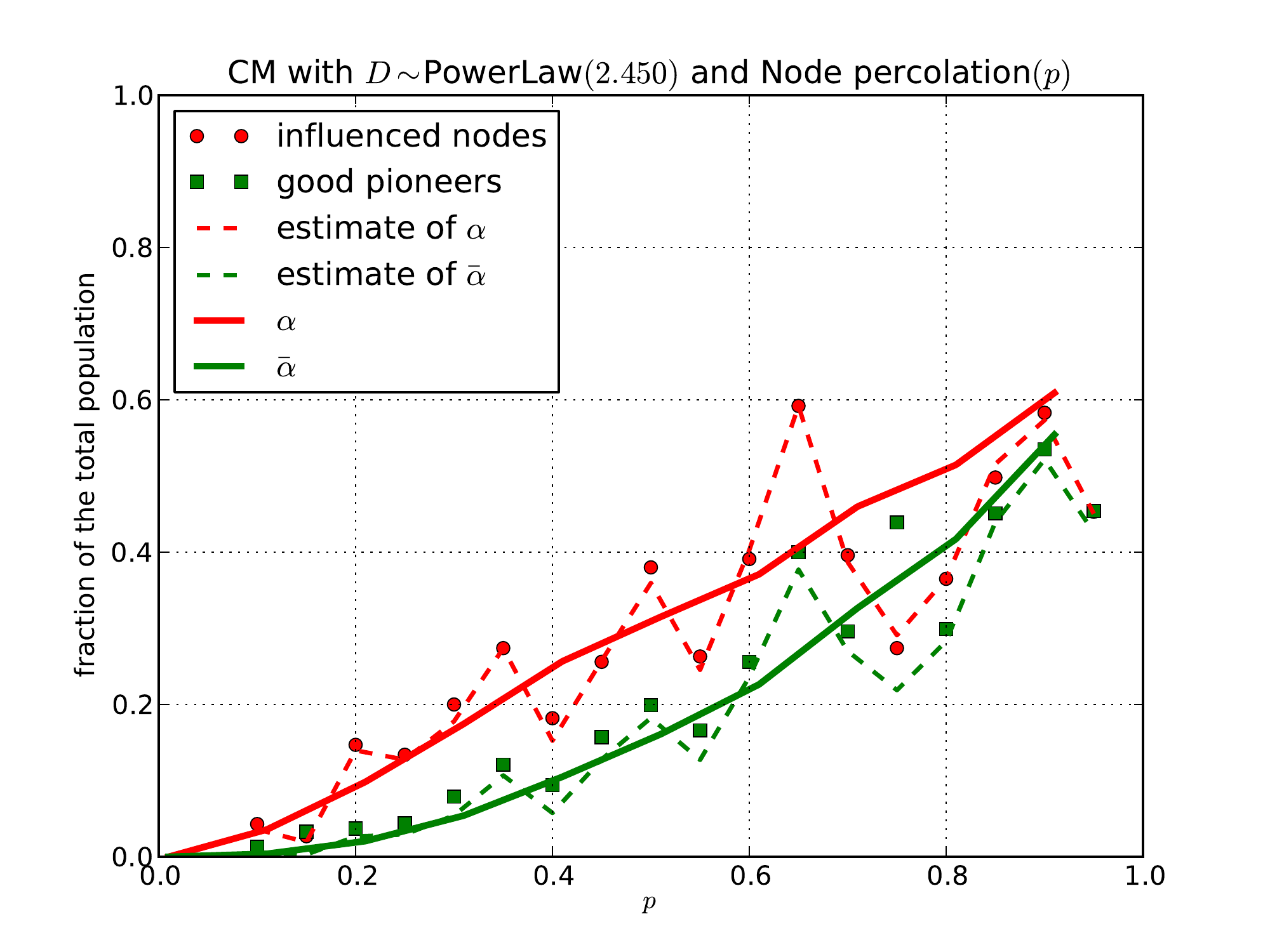}}
\caption{Node percolation (``apathetic and enthusiastic users'') on CM
  with Poisson and Power-Law degree  of mean $\E[D]\approx 2$
  ($\lambda=2$ and $\beta=2.45$).  The influenced component and the critical values for $p$ 
 are equal to these for the CM with Bernoulli transmissions.
 The set of  good pioneers is  smaller than  the influenced population.
 We do not observe the  phase transition for the Power-Law model since $\beta<3$.
 \label{f.PNPPL}}
\end{center}
\end{figure}

\begin{figure}[t!]
\begin{center}
\centerline{\includegraphics[width=1.2\linewidth]{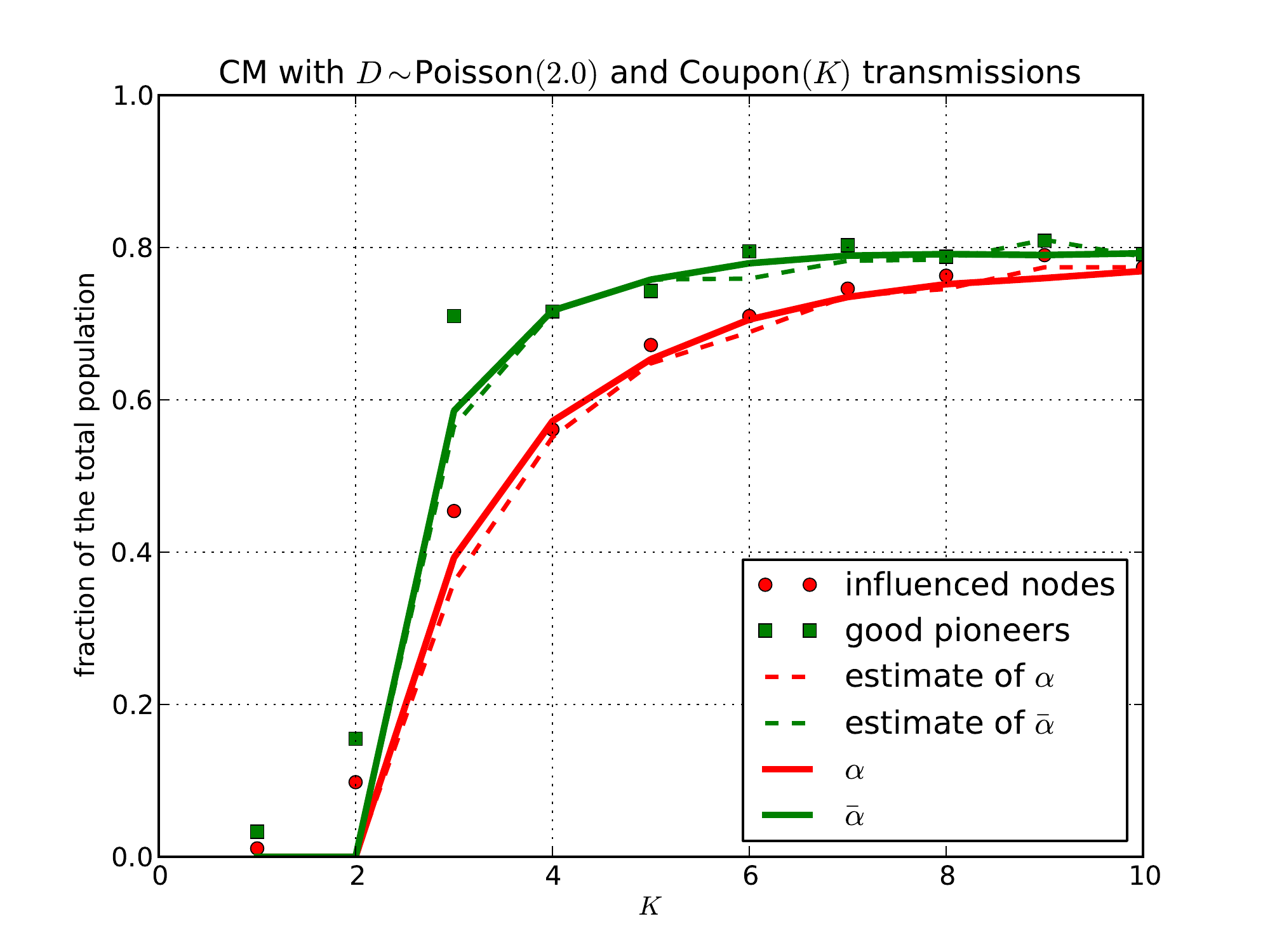}}
\centerline{\includegraphics[width=1.2\linewidth]{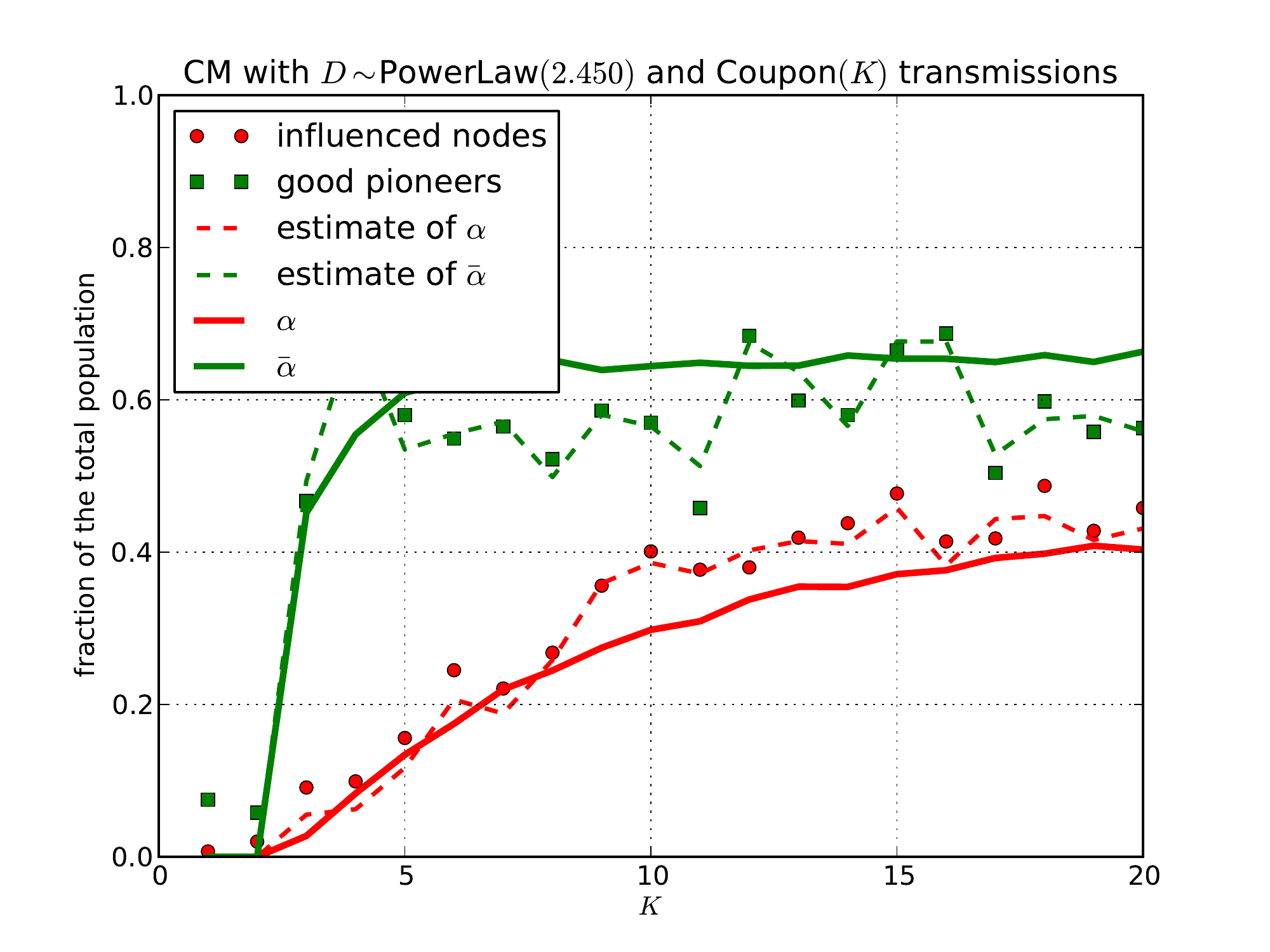}}
\caption{Coupon collector dynamics ``absentminded users'') on CM
  with Poisson and Power-Law degree  of mean $\E[D]\approx 2$
  ($\lambda=2$ and $\beta=2.45$). The set of  good pioneers is
  bigger than  the influenced population. 
 \label{f.PCCPL}}
\end{center}
\end{figure}

\section{Conclusion and Future Work}
Diffusion studies on networks generally tend to focus on the component
of population which is vulnerable to
influence, the component that can be reached starting from an initial
target.  In this work, we focus on the 
other side of coin, i.e., the subset of population, called good
pioneers, from which an
initial target must be picked so that a large fraction of the
population is influenced.

Our  analysis of the set of good pioneers 
is based on a new approach proposed in~\cite{viral},
consisting of identifying this subset as the big component of a  {\em
  reverse dynamic} 
in which  an ``acknowledgement'' message 
is sent in the reversed direction on every edge thus allowing to  trace 
all the possible sources of influence of
a given vertex. 

Based on the recent graph-theoretical results
obtained through this approach regarding the existence and the size of both
subsets: of good pioneers and vulnerable population, we propose a simple yet useful methodology for the analysis
of an ongoing viral marketing campaign
from the statistical data gathered at its early stage.
It allows to verify whether the network is not too fragmented,
check the effectiveness of the previous phase of the campaign,
and  gives tools for making rational economic decisions regarding its future
development.

Among many  interesting questions raised by the present work, 
let us mention the relationship between connectedness of the sub-graph
induced by good pioneers and the measures of centrality for viral marketing.

\appendix

In what follows we briefly present the main arguments
leading to the results presented in Section~\ref{ss.Claims}.
See~\cite{viral} for the details.
\section{Influence diffusion analysis}
A standard technique 
for the analysis of diffusion of information on the CM
involves the simultaneous exploration of the model and the propagation
of the influence.
This exploration-and-propagation process can be approximated at two
different ``scales'':
A {\em branching process} can be used to 
approximate the initial phase of the process.
In fact, the  CM is locally ``tree-like'' and probability of non-extinction
of this tree should be related to the probability of choosing a good
pioneer for the influence propagation.
A {\em fluid limit} analysis can be used to describe the evolution
of the  process up to the time when 
the exploration of a  big component is completed and to characterize
the size of this component. 
The latter approach, recently proposed in~\cite{JanLuc},
is adopted in~\cite{viral} to prove the results presented in
Section~\ref{ss.Claims}.

A fundamental difference with respect to the study of the
big component of the classical CM stems from the directional
character of our propagation dynamic:
the edges matching a transmitter and a receiver half-edge
can relay the influence from the transmitter half-edge to the receiver one,
but not the other way around. This means that  the good pioneers do not need
to belong to the big (influenced) component, and vice versa. 

In this context,  a {\em reverse dynamic} is introduced
in~\cite{viral}, in which 
a message (think of an ``acknowledgement'') can be sent in the
reversed direction on every edge (from an arbitrary half-edge to the
receiver one),  which traces all the possible sources of influence of
a given vertex.  
This reversed dynamic can be studied using the same fluid-limit approach as the
original one, leading to the proof of uniqueness
of the big component of the reversed process and the characterization of
its size. It is shown in~\cite{viral} that 
this reverse-dynamic approach precisely 
coincides with the
probability of the non-extinction of the branching process
approximating the initial phase of the original (forward) exploration
process.
That is (under some additional technical assumptions, which we expect
can be relaxed), the  big component of the reverse process
coincides with the set of good pioneers.

In what follows we briefly present the details of both dynamics as
well as the branching process approximation.

\subsection{Analysis of the forward-propagation process}
 Throughout the construction and propagation process, we keep track of what we call \textit{active transmitter} half-edges. To begin  with, 
 all  the vertices and the attached half-edges are \textit{sleeping} but once influenced, a vertex and its half-edges become \textit{active}. 
 Both 
 sleeping and active half-edges at any time constitute what we call \textit{living} half-edges and when two half-edges are matched to 
 reveal an 
 edge along which the flow of influence has occurred, the half-edges are pronounced \textit{dead}. Half-edges are further classified 
 according to 
 their ability or inability to transmit information as \textit{transmitters} and \textit{receivers} respectively. 
 We initially give all the half-edges i.i.d. random maximal lifetimes
 with exponential (mean one) distribution, then go through the following algorithm.
 
 \begin{enumerate}
  \item[C1] If there is no active half-edge (as in the beginning), select a sleeping vertex and declare it active, along with all its 
  half-edges. 
 For definiteness, we choose the vertex uniformly at random among all sleeping vertices. If there is no sleeping 
 vertex left, the process stops. 
  \item[C2] Pick an active transmitter half-edge and kill it. \label{c2}
  \item[C2] Wait until the next living half-edge dies (spontaneously, due to the expiration of its exponential life-time). This is joined to 
  the one killed in previous step to form an edge of the graph along 
 which information has been transmitted. If the vertex it belongs to is sleeping, we change its status to active, along with all of its 
 half-edges. 
 Repeat from the first step. \label{c3}
 \end{enumerate}
 
 Every time C\ref{c1} is performed, we choose a vertex and trace the flow of influence from here onwards. Just before C\ref{c1} is 
 performed again, 
 when the number of active transmitter half-edges goes to $0$, we've explored the extent of the graph component that the chosen vertex 
 can influence, that had not been previously influenced.

In a  typical evolution of the exploration
process for large $n$ (number of nodes), 
the number of active transmitter half-edges visits  $0$
some number of times at an early stage of the exploration (these
times correspond to the completion of ``small'' influenced components)
before finally it takes off and stays strictly positive for a long period. The fist visit to~0
after this long period corresponds to the completion of a big
influenced component. In the ``fluid-limit''
scaling  of the process (when the number of nodes goes to infinity) 
 trajectories of the {\em fraction}  of active transmitter half-edges
converge to the deterministic  function~$H(e^{-t})$, where $H$ is
given by~(\ref{Hx}). The smallest strictly positive time $t_0>0$ for
which $H(e^{-t_0})=0$ approximates the time to the completion of a big
influenced component. Also, the fraction of all discovered nodes up to time
$t<t_0$ converges to $1 - G_D(e^{-t})$. 
It can be shown that the total size of all the  small components
discovered before 
the big one is negligible. Hence the fraction of nodes influenced in
the first  big component is approximately $1 - G_D(e^{-t_0})$, which 
is the first statement of Claim~\ref{c.alpha}. 
Uniqueness of such a big component can also be concluded form the fluid limit approximation.

\subsection{Analysis of the reverse-propagation process}
 One  introduces the following  dynamic to trace the possible sources of influence of a randomly chosen vertex.
 As in the forward process, we initially give all the half-edges
 i.i.d. random maximal lifetimes with exponential (mean one) distribution 
and then go through the following algorithm.

 \begin{enumerate}
  \item[D1] If there is no active half-edge (as in the beginning), select a sleeping vertex and declare it active, along with all its 
  half-edges.  For definiteness, we choose the vertex uniformly at random among all sleeping vertices. If there is no sleeping 
 vertex left, the process stops. \label{cb1}
  \item[D2] Pick an active half-edge and kill it. \label{cb2}
  \item[D3] Wait until the next transmitter half-edge dies (spontaneously). This is joined to the one killed in previous step to form an 
  edge of 
 the graph. 
 	If the vertex it belongs to is sleeping, we change its status to active, along with all of its half-edges. Repeat from the first 
 step. \label{cb3}
 \end{enumerate}
The analysis of this process
goes along the same lines as that of the forward one, with $H(x)$
and $G_D(x)$ replaced by $\overline H(x)$ and $G_{D^{(t)}}(x)$, respectively.

Additional work is needed (cf~\cite{viral} to formally relate the big component
of the dual process to the set of good pioneers. The branching approximation described in 
what follows confirms this formal approach.

\subsection{Branching-process approximation of the probability of
  choosing a good pioneer}
In this approach one approximates the exploration of the
original process at an early stage (before loops appear) 
 by a  Galton-Watson branching process, and conjectures that
the probability of the extinction of this process is equal to the
probability of choosing a good pioneer.
Using the well known result for the Galton-Watson branching process,
cf e.g.~\cite{Mass}, this probability can be shown equal to the
right-hand-side of~(\ref{e.alphabar}).

More precisely, if we start the exploration with a uniformly chosen
pioneer, its degree distribution follows $(D^{(r)},D^{(t)})$ with total degree, $D =D^{(r)}+D^{(t)}$. However,
since the probability of getting influenced is proportional to one's total degree, 
the degrees of the friends of this pioneer 
won't follow this  joint distribution. Their joint
receiver-and-transmitter degree distribution,
denoted by $(\widetilde{D}^{(r)},\widetilde{D}^{(t)})$, 
is given by
\begin{equation}
\P\{\,\widetilde{D}^{(r)}=v,\widetilde{D}^{(t)}=w\,\} =
\frac{\left(v+1\right){p}_{v+1,w}+\left(w+1\right){p}_{v,w+1}}{\E[D] }.
\end{equation}
The same (modified) distribution characterizes the nodes in the subsequent generations of the branching process.
The well known condition for the non-extinction of the branching process 
which diverges from the first-generation,
$$\mathbb{E}[{\widetilde{D}}^{(t)}]>1,$$
can be shown to agree with~(\ref{e.VA}). 
Further, if this condition is satisfied, the extinction probability
$\widetilde{p}_{ext}$ of
this  branching process
can be shown to be equal to  the smallest zero of $\overline H(x)$ (as defined
in~(\ref{Hbarx})) in the interval  $(0,1)$. Finally, the non-extinction probability
of the whole process started at the initial pioneer is equal
 $1-\mathbb{E}\left[({\widetilde{p}_{ext}})^{{D}^{\left(t\right)}}\right]$,
which agrees with the right-hand-side of~(\ref{e.alphabar}).


\end{document}